\pgfplotsset{compat=newest}
\def\clap#1{\hbox to 0pt{\hss#1\hss}}
\newcommand*{\eg}{e.g.\@\xspace}
\newcommand*{\ie}{i.e.\@\xspace}
\definecolor{mplblue}{RGB}{31,119,180}
\newcommand*\circled[1]{\tikz[baseline=(char.base)]{\node[shape=circle,draw,inner sep=1pt, color=mplblue, fill=white] (char) {\footnotesize #1};}}
\newtheorem{lemma}{Lemma}
\newtheorem{theorem}{Theorem}
\newtheorem{definition}{Definition}
\begin{document}

\title{Optimization-Based Predictive Congestion Control for the Tor Network: Opportunities and Challenges}

\author{Christoph Döpmann}
\authornote{Both authors contributed equally to this research.}
\email{christoph.doepmann@tu-berlin.de}
\affiliation{%
  \institution{Technische Universität Berlin}
  \department{Distributed Security Infrastructures}
  \city{Berlin}
  \country{Germany}
}
\author{Felix Fiedler}
\authornotemark[1]
\email{felix.fiedler@tu-dortmund.de}
\affiliation{%
  \institution{Technische Universität Dortmund}
  \department{Process Automation Systems}
  \city{Dortmund}
  \country{Germany}
}
\author{Sergio Lucia}
\email{sergio.lucia@tu-dortmund.de}
\affiliation{%
  \institution{Technische Universität Dortmund}
  \department{Process Automation Systems}
  \city{Dortmund}
  \country{Germany}
}
\author{Florian Tschorsch}
\email{florian.tschorsch@tu-berlin.de}
\affiliation{%
  \institution{Technische Universität Berlin}
  \department{Distributed Security Infrastructures}
  \city{Berlin}
  \country{Germany}
}

\begin{abstract}
Based on the principle of onion routing,
the Tor network achieves anonymity for its users
by relaying user data over a series of intermediate relays.
This approach makes congestion control in the network a challenging task.
As of today, this results in higher latencies due to considerable backlog
as well as unfair data rate allocation.
In this paper, we present a concept study of PredicTor,
a novel approach to congestion control that tackles clogged overlay networks.
Unlike traditional approaches, it is built upon the idea of distributed model predictive control,
a recent advancement from the area of control theory.
PredicTor is tailored to minimizing latency in the network and achieving max-min fairness.
We contribute a thorough evaluation of its behavior in both
toy scenarios to assess the optimizer
and complex networks to assess its potential.
For this, we conduct large-scale simulation studies
and compare PredicTor to existing congestion control mechanisms in Tor.
We show that PredicTor is highly effective in reducing latency
and realizing fair rate allocations.
In addition, we strive to bring the ideas of modern control theory to the networking community,
enabling the development of improved, future congestion control.
We therefore demonstrate benefits and issues alike
with this novel research direction.
\end{abstract}

\begin{CCSXML}
<ccs2012>
<concept>
<concept_id>10003033.10003106.10003114</concept_id>
<concept_desc>Networks~Overlay and other logical network structures</concept_desc>
<concept_significance>500</concept_significance>
</concept>
<concept>
<concept_id>10003033.10003079.10011672</concept_id>
<concept_desc>Networks~Network performance analysis</concept_desc>
<concept_significance>500</concept_significance>
</concept>
<concept>
<concept_id>10003033.10003039.10003048</concept_id>
<concept_desc>Networks~Transport protocols</concept_desc>
<concept_significance>500</concept_significance>
</concept>
<concept>
<concept_id>10003033.10003083.10011739</concept_id>
<concept_desc>Networks~Network privacy and anonymity</concept_desc>
<concept_significance>300</concept_significance>
</concept>
<concept>
<concept_id>10002978.10002991.10002994</concept_id>
<concept_desc>Security and privacy~Pseudonymity, anonymity and untraceability</concept_desc>
<concept_significance>300</concept_significance>
</concept>
<concept>
<concept_id>10003033.10003079.10003081</concept_id>
<concept_desc>Networks~Network simulations</concept_desc>
<concept_significance>300</concept_significance>
</concept>
</ccs2012>
\end{CCSXML}

\ccsdesc[500]{Networks~Overlay and other logical network structures}
\ccsdesc[500]{Networks~Network performance analysis}
\ccsdesc[500]{Networks~Transport protocols}
\ccsdesc[300]{Networks~Network privacy and anonymity}
\ccsdesc[300]{Security and privacy~Pseudonymity, anonymity and untraceability}
\ccsdesc[300]{Networks~Network simulations}

\keywords{Tor network, Multi-hop congestion control, Model predictive control}

\maketitle

\section{Introduction}

In today's digital society, protecting Internet privacy has become more important than ever.
The growing demand for online anonymity has resulted in advanced technical solutions to satisfy this need.
With around 2.2~million daily users~\cite{tor-metrics},
the Tor network~\cite{dingledine2004tor} is by far the most widely used anonymization network, as of today.
Tor builds upon the idea of \emph{onion routing}~\cite{goldschlag1996onionrouting}.
It consists of an overlay network connecting so-called relay nodes,
which can be used to establish anonymous connections.
To this end, the Tor client software builds a cryptographically-secured \emph{circuit},
a path over three relays, where each relay knows its immediate neighbors only.

This has several performance implications.
Due to re-routing the traffic multiple times through the overlay network,
an extra delay is inevitable to gain anonymity.
The performance---in terms of latency, data rates, and fairness---is however
suboptimal~\cite{DBLP:conf/uss/ReardonG09,DBLP:conf/p2p/DhungelSRHR10}.
One of the major shortcomings is the lack of
effective congestion control~\cite{DBLP:conf/pet/AlSabahBGGMSV11,DBLP:conf/p2p/DhungelSRHR10}
that minimizes network load and optimizes the user-perceivable performance.
While congestion control is a nontrivial task even for single connections,
relaying data over a series of nodes, like in Tor, amplifies the problem;
especially when rising delays occur in the network.
In particular, Tor relays are unable to react to congestion,
for example by signaling upstream to throttle sending rates.
Moreover, a growing demand for the Tor network will also result in
a growing need for effective congestion control to satisfy the users' expectations
as far as latency, throughput, and fairness are concerned.
Therefore, one also has to consider novel research approaches to meet these challenges.

With \emph{PredicTor}~\cite{PredicTor},
we introduce a new research direction towards congestion control
in multi-hop overlay networks like the Tor network.
PredicTor is the first system to apply
\emph{distributed Model Predictive Control}~(MPC)~\cite{Mota2012} to congestion control
in the Tor network.
MPC in general is a modern technique from the field of control theory
that uses predictions about the future system state as well as the repeated solving
of a formal optimization problem to achieve optimal behavior.
Predictions are deduced from a mathematical system model
that is instantiated with real-world measurements.
For applying MPC in the context of multi-hop congestion control,
we \emph{distribute} it among relays.
That is, each relay solves the optimization problem with its \emph{local} view of the network,
but controllers%
\footnote{The term \enquote{controller} refers to the local application of control techniques.
It does not imply a centralized entity.}
cooperate by exchanging their predictions to establish network-wide behavior.
In contrast to the current behavior of Tor,
PredicTor avoids congestion by generating \emph{backpressure}.
By relying on a formal definition of the optimization goal,
it becomes possible to optimize the congestion control
within the network for specific optimization objectives.
In PredicTor, we put a special emphasis on low latency and fairness in the network,
because these have previously been identified
to be especially problematic in the Tor network~\cite{tschorsch11maxmin,DBLP:conf/uss/ReardonG09}.
While optimization-based rate allocation has been researched before,
with equivalent formulations for TCP and other methods~\cite{He2007},
we introduce a novel optimization-based \emph{max-min fairness} formulation.

In addition to presenting PredicTor,
the goal of this paper is to bring the underlying
control-theoretic approach to the network community.
We pinpoint the merits and the potential of applying distributed MPC to congestion control,
but also point out current shortcomings thereof.
Our work should be understood as a concept study for this novel field
rather than a ready-to-deploy finished technical solution.
We envision opening up new directions and fostering the development
of novel, innovative techniques for congestion control.
Our evaluation reveals that PredicTor is able to clearly reduce latency:
In a small model scenario, it achieves a latency reduction from 553~ms (vanilla Tor) to 94~ms.
In larger, random networks, the advantage becomes even more apparent because,
in contrast to traditional approaches,
latency does not significantly grow with growing congestion.
At the same time, PredicTor consistently realizes near-perfect max-min fairness.
However, we show that this comes at the cost of lower throughput and more signaling overhead. %

The contributions are summarized as follows.

\begin{itemize}
\item We introduce \emph{PredicTor} for congestion control in the Tor network based on distributed MPC.
      Comparing to~\cite{PredicTor}, we add an important change
      to its optimization problem that strengthens its robustness.
\item We present a novel, optimization-based formulation of max-min fairness
      and leverage it as an optimization goal in PredicTor.
\item We implement a prototype of PredicTor to enable experimental assessment of its behavior.
      Our implementation is made available as an open source software project.\footnote{\texttt{https://github.com/cdoepmann/PredicTor}}
\end{itemize}

PredicTor was first introduced in~~\cite{PredicTor}.
In this journal paper, we especially focus on PredicTor's significance for networking research by covering the following additional aspects.

\begin{itemize}
\item We introduce optimization-based predictive congestion control to the networking community.
\item We discuss possible security and privacy implications of PredicTor.
\item We provide a simulation study of PredicTor's performance in complex network scenarios,
      analyzing whether its optimization goals can be realized in non-trivial environments.
\item We leverage the results as well as a thorough investigation of PredicTor's
      underlying assumptions to identify the benefits as well as potential drawbacks
      of this new approach towards congestion control.
\end{itemize}

This paper is structured as follows:
We start by presenting the preliminaries for our approach,
including related terminology and mathematical notation, in Section~\ref{sec:preliminaries}.
Section~\ref{sec:mpc_formulation} introduces PredicTor itself in full detail,
including our novel optimization-based method for obtaining max-min fairness
and the dynamic system model.
We discuss security and privacy implications of PredicTor in Section~\ref{sec:security}.
In Section~\ref{sec:eval}, we present our evaluation of PredicTor,
focusing first on small scenarios to demonstrate its functioning
before we leverage larger-scale simulations of complex networks for deeper insights.
In this section, we also discuss the implications for further research on congestion control using MPC.
We complete our work by presenting related work in Section~\ref{sec:related-work}
and summarizing this contribution in Section~\ref{sec:conclusion}.

\section{Preliminaries} \label{sec:preliminaries}

\subsection{The Tor Network}

Facing today's growing need for online privacy,
Tor denotes an essential tool for applications that require anonymity on the Internet.
It is an \emph{overlay network} that makes use of
the principle of \emph{onion routing}~\cite{goldschlag1996onionrouting}.
It achieves anonymity by tunneling users' data through the network,
over a series of relays, called a \emph{circuit}.
Onion routing ensures that each hop in the relay only knows its immediate predecessor and successor.
As a consequence, destinations cannot identify the origin of streams of communication they receive.
Clients typically choose a sequence of three random relays for constructing a circuit.
The necessary resources (service and bandwidth) are contributed by volunteers
and are not subject to a central authority.

More formally, we introduce Tor as an overlay network graph~$G(N,E)$
where $N$ denotes the set of nodes and $E$ the set of overlay links.
The network has a total of $|N|=n$~nodes and $|E|=e$~connections.
We denote the set of Tor circuits~$P$ with $i \in P$
being the i-th circuit of the set of cardinality $|P|=p$.
$P_{\alpha} \in P$ denotes the subset of circuits traversing node~$\alpha \in N$.
Generally, we refer to circuits with Roman letters and to nodes with Greek letters.
When considering the network at the circuit level,
we denote the data rate of circuit~$i$ with $r_i$ (in packets per second).
Furthermore, each node~$\alpha \in N$ of the overlay network
has a limited capacity~$C_{\alpha}$, since overlay connections share the same physical connection.
Each node~$\alpha \in N$ can receive, store, and send data from each circuit~$i \in P_{\alpha}$.
We denote $s_{\alpha, i} $ the circuit queue (storage in number of packets)
in node~$\alpha$ for circuit~$i$
and the vector with all queues for each circuit
in node~$\alpha$ as $s_{\alpha} \in \mathbb{N}^{|P_{\alpha}|}$.

A useful metric to measure the congestion in the network is given by the \emph{backlog} that captures the amount of data that is on its way through the network. In terms of our formal description, we define it as follows:
\begin{definition}
    The data backlog~$b$ of a network~$G(N,E)$ is computed for all nodes~$\alpha \in N$ and all circuits~$i\in P$ as:
    \begin{equation}
        b = \sum_{\alpha \in N} \sum_{i \in P} s_{\alpha,i}.
    \end{equation}
\end{definition}

  \definecolor{color0}{rgb}{0.12156862745098,0.466666666666667,0.705882352941177}
  \definecolor{color1}{rgb}{1,0.498039215686275,0.0549019607843137}
  \definecolor{color2}{rgb}{0.172549019607843,0.627450980392157,0.172549019607843}
    \tikzset{
    written/.style={text=black,font=\scriptsize},
    circuit/.style={ultra thick},
    }
    \def\onion{\includegraphics[height=3.0em]{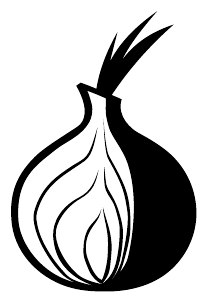}}
    \def\xfactor{1.0}
    \def\drawtopology{%
        \node (bottleneck) {\onion};
        \node[left=\xfactor*7em of bottleneck] (sender 1) {\onion};
        \node[below=0.5em of sender 1] (sender 2) {\onion};
        \node[right=\xfactor*7em of bottleneck] (receiver 2) {\onion};
        \node[above=0.5em of receiver 2] (receiver 1) {\onion};
        \node[below=0.5em of receiver 2] (receiver 3) {\onion};

        \begin{scope}[on background layer]
        \draw[->,circuit,color=color0,transform canvas={yshift=-6pt}]
            ($(sender 1.center) + (-4em*\xfactor,+6pt)$) --
            ($(bottleneck.center) + (0,+6pt)$) --
            node[above,sloped,written] {circuit 1}
            (receiver 1.center) --
            ++(4em*\xfactor,0);
        \draw[->,circuit,color=color1,transform canvas={yshift=-6pt}]
            ($(sender 1.center) + (-4em*\xfactor,0)$) --
            (sender 1.center) --
            (bottleneck.center) --
            node[above,sloped,written] {circuit 2}
            (receiver 2.center) --
            ++(4em*\xfactor,0);
        \draw[->,circuit,color=color2,transform canvas={yshift=-6pt}]
            ($(sender 2.center) + (-4em*\xfactor,0)$) --
            (sender 2.center) --
            ($(bottleneck.center) + (0,-6pt)$) --
            node[above,sloped,written] {circuit 3}
            (receiver 3.center) --
            ++(4em*\xfactor,0);
        \end{scope}
    }
\begin{figure}
    \centering
    \begin{tikzpicture}
        \drawtopology

        \node[above=-5pt of bottleneck,written] {bottleneck};

        \node[fit={($(sender 2.south west) + (-0.5em,0.5em)$) ($(receiver 1.north east) + (0.5em,0)$)}] (container) {};
        \draw[dashed,thick] (container.north west) -- (container.south west);
        \draw[dashed,thick] (container.north east) -- (container.south east);
        \node[below left=3pt of container.north west,written] {\bfseries senders};
        \node[below right=3pt of container.north east,written] {\bfseries receivers};
    \end{tikzpicture}
    \caption{Example Tor topology (toy scenario).}
    \label{fig:example-topology}
\end{figure}
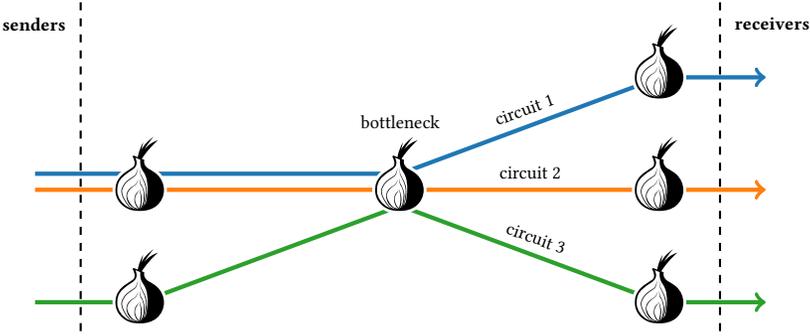

For various parts of this paper, we focus on a small example topology of circuits,
depicted in Figure~\ref{fig:example-topology}.
It consists of two sending relays that carry three circuits,
whose traffic streams meet in a shared node, constituting a bottleneck.
Afterwards, the three circuits go to distinct destination relays.
We found this simple topology to be useful
for evaluating the behavior of congestion control in Tor
because it represents a typical situation in which a single relay is overloaded by several circuits.
The scenario is still simple enough to understand the decisions made by congestion control.

\subsection{Fairness}

When considering performance in networks,
the immediate measures that come into mind are throughput and latency.
However, another important metric is fairness.
Especially in overlay networks like Tor, where many data transfers compete for the available resources,
fairness is important to guarantee an adequate user experience to the majority of users.
Different fairness measures have been put forward~\cite{bertsekas1992data}.
We here focus on the strong notion of \emph{max-min fairness}
as it has been proposed as fairness goal for the Tor network.
To define it formally, we first introduce the notion of \emph{feasibility}---%
that is, distributions of data rates that can actually be realized in the network~\cite{bertsekas1992data}.
\begin{definition}
	\label{def:feasible_rate}
	A rate vector $r=[r_1, r_2, \dots, r_p]$ is feasible if:
	\begin{align}
	\forall i \in P:& \quad    0\leq r_i \quad  \text{and}\\
	\forall \alpha \in N:& \quad    \sum_{i\in P_\alpha} r_i \leq C_{\alpha}.
	\label{eq:feasible_rate_capacity_limit}
	\end{align}
	We denote $R_f$ the set of feasible rate vectors.
\end{definition}
This allows us to define max-min fairness as follows:
\begin{definition}
\label{def:max_min_fair}
A feasible rate vector~$r^f\in R_f$ is called max-min fair,
if for all circuits~$i\in P$ and for all other feasible rates~$\bar{r}\in R_f$ it holds that:
\begin{equation}
	\begin{gathered}
	     \bar{r}_i \geq r_i^f \Rightarrow \exists \, j\in P:  r_{j}^f \leq r_i^f \land \bar{r}_j \leq r_j^f.
	\end{gathered}
\end{equation}
\end{definition}
This definition means that if a rate~$r^f$ is max-min fair,
any other feasible rate that increases the rate for the favored circuit~$i$
comes at the cost of reducing the rate for the disadvantaged circuit~$j$,
which is already smaller than the rate of circuit~$i$.

\subsection{Model Predictive Control}\label{ssec:MPC}
In this subsection, we briefly review the concept of model predictive control~(MPC),
which is used to formulate our proposed congestion controller.
Fundamental for this approach is the notion of a \emph{dynamic system}:
\begin{equation}\label{eq:dynamic_system}
	x^{k+1} = f(x^k, u^k, p^k),
\end{equation}
which relates a state $x^k \in \mathbb{R}^n$,
input $u^k \in \mathbb{R}^m$,
and parameter~$p^k\in \mathbb{R}^p$ at sampling time~$k$
to the state at the next sampling time $k+1$.
Under the assumption that the model~$f(x^k,u^k, p^k)$ accurately describes the system,
Equation~\eqref{eq:dynamic_system} can be used to compute the future states of the system given the initial state,
sequence of inputs and parameters.
Based on the dynamic system in \eqref{eq:dynamic_system}, we introduce the finite horizon optimal control problem (OCP):
\begin{subequations}\label{eq:MPC_problem_general}
\begin{align}
\label{eq:MPC_problem_general_01}
\min_{\textbf{u}, \textbf{x}}\quad &\sum_{k=0}^{N_{\text{horz}}}
l(x^k,u^k,p^k)\\
\text{\textbf{subject to}:}\quad && \nonumber\\
\text{state dynamics:} \quad & x^{k+1} = f(x^k, u^k, p^k),
\label{eq:MPC_problem_general_02}\\
\text{constraints:}\quad  &  g(x^k,u^k,p^k) \leq 0 \quad       \forall k =0,\dots, N_{\text{horz}}
\label{eq:MPC_problem_general_03}\\
\text{initial conditions:}\quad  & x^0 = x_{\text{init}}.
\end{align}
\end{subequations}
In this problem, we are optimizing over finite sequences of inputs $\mathbf{u}=[u^0,\dots, u^{N_{\text{horz}}}]$
and states $\mathbf{x}=[x^0,\dots, x^{N_{\text{horz}}+1}]$.
The optimal solution is obtained for a given initial state $x_{\text{init}}$
and a sequence of parameters $\mathbf{p}=[p^0,\dots, p^{N_{\text{horz}}}]$.
Note that we use bold letters to denote trajectories.
The objective is to minimize an arbitrary cost function under the consideration of additional constraints.
Typically, this cost function consists of individual contributions for each step of the horizon,
as shown in~\eqref{eq:MPC_problem_general_01}.
The previously introduced dynamic system \eqref{eq:dynamic_system} is considered in~\eqref{eq:MPC_problem_general_02}
as an equality constraint.
Additionally, we have in~\eqref{eq:MPC_problem_general_03} inequality constraints on states and inputs, possibly under consideration of the parameters.
This possibility to explicitly formulate constraints
is a major advantage of MPC over alternative advanced control techniques.

As the solution of the OCP (see~\eqref{eq:MPC_problem_general}),
we obtain the predicted future sequence of states and the respective sequence of inputs.
For the control application, the first element of the sequence of inputs, \ie $u^0$, is applied to the system,
typically in the form of a constant value over a finite sampling time.
After this sampling time, the new state of the system is obtained and together with the updated sequence of parameters
problem the OCP in~\eqref{eq:MPC_problem_general} is solved again.
Feedback trough this \emph{closed-loop} application allows to robustly react to disturbances
and mitigates potential mismatches between model and controlled system.

MPC is also a popular method to deal with distributed control systems.
In this application, multiple controllers make local decisions and attempt to achieve global control goals
by communicating their decisions.
In particular, the distributed MPC controllers can exchange their predicted future states and inputs
which are obtained as a byproduct when computing the current input to the system.
Connected controllers can consider this information as the additional parameters in~\eqref{eq:MPC_problem_general}.
Knowledge over future actions of connected controllers has the significant advantage that the effect of delay can be mitigated.

\section{PredicTor}\label{sec:mpc_formulation}
In this section, we introduce PredicTor, our newly proposed congestion controller for the Tor network.
PredicTor is developed with the following objectives in mind:
Primarily, we are aiming to avoid congestion by limiting the data backlog of circuits,
and secondly, we want to achieve \emph{max-min fairness} of the network.
To this end, we first present an optimization-based method to obtain max-min fairness
of an overlay network (Theorem~\ref{theo:max_min_optim}) in Subsection~\ref{ssec:optim_fairness},
which we have previously derived in~\cite{PredicTor}.
However, the presented Theorem~\ref{theo:max_min_optim} cannot directly be used for congestion control,
as it would require global knowledge and control authority of the Tor network.
Instead, it serves as the basis for our proposed distributed MPC formulation
for which we establish the preliminaries in Subsection~\ref{ssec:feedback_distr_mpc}.
In particular, we introduce the states, inputs, and system dynamics as well as the
concept of information exchange between adjacent nodes.
In comparison to our previous work~\cite{PredicTor},
this concept has been extended to address several shortcomings in previously unconsidered situations.
Most importantly, PredicTor nodes are now capable to request an exact rate increase from their successor nodes.
The full optimal control problem is then stated in Subsection~\ref{ssec:ocp}.
Finally, we discuss the interaction of PredicTor and a Tor relay in Subsection~\ref{ssec:controller_tor_interact}.

\subsection{Optimization-based Fairness}\label{ssec:optim_fairness}
We present an optimization-based method (Theorem~\ref{theo:max_min_optim}) to achieve max-min fairness.
For this, we first introduce the formal notion of a \emph{bottleneck}.
\begin{definition}
	\label{def:bottleneck}
	For a circuit~$i\in P_{\alpha}$ and a rate vector~$r$, we denote node~$\alpha \in N$ a bottleneck, if:
	\begin{equation}
	\sum_{i \in P_{\alpha}} r_i = C_{\alpha}, \quad \forall j \in P_{\alpha}: \ r_i \geq r_j
	\end{equation}
\end{definition}
\begin{lemma}
	\label{lemma:min_max_bottleneck}
	Let $r^f$ be a max-min fair rate vector. Each circuit~$i \in P$ has exactly one bottleneck.
	This bottleneck is the global rate-limiting factor of the circuit under stationary conditions.
\end{lemma}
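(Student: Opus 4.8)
The plan is to split the claim into three parts: (i) every circuit has at least one bottleneck, (ii) such a node is exactly what caps the circuit's rate in steady state, and (iii) it is unique. Part~(i) carries the weight and I would prove it by contradiction, perturbing $r^f$ so as to help circuit $i$ without violating any capacity constraint; parts~(ii) and~(iii) then follow directly from Definition~\ref{def:bottleneck} and the capacity bound~\eqref{eq:feasible_rate_capacity_limit}.

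For~(i), suppose a circuit $i\in P$ has no bottleneck under $r^f$, and let $N_i\subseteq N$ be the set of nodes it traverses. Negating Definition~\ref{def:bottleneck}, every $\alpha\in N_i$ either has spare capacity, $\sum_{j\in P_\alpha}r^f_j<C_\alpha$, or carries a circuit $j(\alpha)\in P_\alpha$ with $r^f_{j(\alpha)}>r^f_i$ (so in particular $j(\alpha)\neq i$). I would then define $\bar r$ by $\bar r_i=r^f_i+\varepsilon$, by subtracting $\varepsilon$ from $r^f_{j(\alpha)}$ once for each saturated $\alpha\in N_i$, and by leaving all other rates unchanged. For a sufficiently small $\varepsilon>0$ all rates stay nonnegative; at each saturated node of $N_i$ the extra $\varepsilon$ sent by circuit $i$ is matched by a decrease of at least $\varepsilon$ from its accompanying circuit, so the node stays at capacity; at the slack nodes of $N_i$ the extra $\varepsilon$ is absorbed; and at every node outside $N_i$ only decreases occur. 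Hence $\bar r\in R_f$. Since $\bar r_i>r^f_i$, max-min fairness (Definition~\ref{def:max_min_fair}) requires the improvement to be paid for by a rate decrease of some circuit $j$ with $r^f_j\leq r^f_i$; but the only circuits that were decreased are the $j(\alpha)$, all of which satisfy $r^f_{j(\alpha)}>r^f_i$, a contradiction. Therefore circuit $i$ must have a bottleneck.

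For~(ii), let $\alpha$ be a bottleneck of circuit $i$, so $\sum_{j\in P_\alpha}r^f_j=C_\alpha$ and $r^f_i=\max_{j\in P_\alpha}r^f_j$. Any operating point that sends circuit $i$ at a rate above $r^f_i$ without reducing the other circuits through $\alpha$ violates~\eqref{eq:feasible_rate_capacity_limit}; the corresponding queues at $\alpha$ then grow without bound, so the network is not in steady state. This is precisely the sense in which $\alpha$ is the global rate-limiting factor of circuit $i$ under stationary conditions. For~(iii), note that any bottleneck of $i$ certifies the same value $r^f_i$ as the largest rate circuit $i$ can sustain, so all bottlenecks of $i$ are equivalent as rate limiters; the only way two distinct nodes can both satisfy Definition~\ref{def:bottleneck} for $i$ is the degenerate case of two simultaneously saturated nodes at which $i$ attains the maximal rate, which I would exclude by a mild non-degeneracy assumption on the capacities (or regard as interchangeable).

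The step I expect to be the main obstacle is~(iii): as stated, \enquote{exactly one} can fail for non-generic capacities, so the clean options are to add a non-degeneracy hypothesis or to restate the conclusion as \enquote{at least one bottleneck, which uniquely fixes the circuit's stationary rate}. The perturbation in~(i) is otherwise routine; the only care needed is the bookkeeping that decreasing a circuit's rate never introduces a new capacity violation, and that every decreased circuit genuinely had rate strictly above $r^f_i$, which is exactly what the failure of the bottleneck condition provides.
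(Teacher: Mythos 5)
The paper does not actually prove this lemma; it defers entirely to Bertsekas and Gallager, and your argument is essentially the standard proof from that reference: the perturbation-and-contradiction argument for the existence of a bottleneck is exactly the classical one, and your bookkeeping (decrementing $r^f_{j(\alpha)}$ once for each saturated node at which it is selected, so that no saturated node on circuit $i$'s path overflows, while slack nodes absorb the extra $\varepsilon$) is sound. Two remarks. First, as literally written, Definition~\ref{def:max_min_fair} uses the non-strict consequent $\bar r_j \leq r_j^f$, which is satisfied by any \emph{unchanged} circuit with $r_j^f \leq r_i^f$; under that wording the contradiction does not formally close. You are implicitly using the standard formulation in which the disadvantaged circuit must be \emph{strictly} decreased, which is clearly what is intended, but the discrepancy is worth flagging. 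Second, your reservation about part~(iii) is justified: the classical result guarantees \emph{at least} one bottleneck per circuit, and \enquote{exactly one} fails for non-generic capacities (a single circuit traversing two nodes of equal capacity has two bottlenecks), so the lemma as stated is slightly too strong. What is true, and what the surrounding development actually uses, is that every bottleneck of circuit $i$ certifies the same limiting value $r_i^f$, so either a non-degeneracy assumption or your proposed restatement (\enquote{at least one bottleneck, which uniquely determines the circuit's stationary rate}) would repair the statement.
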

\begin{proof}
	The proof is shown in~\cite{bertsekas1992data}.
\end{proof}
This allows us to state the following theorem, which we previously introduced in~\cite{PredicTor}.
\begin{theorem}
\label{theo:max_min_optim}
An overlay network achieves max-min fairness with rate~$r = r^{\text{max}} - \Delta r$ as the optimal solution of:
\begin{equation}
    \label{eq:opt_max_min_fairness}
    \begin{aligned}
      c= \min_{\Delta r} \sum_{i \in P}& \Delta r_i^2\\
        \text{subject to:}\quad
         r^{\text{max}} - \Delta r&\in R_f,\\
        0\leq \Delta r &\leq r^{\text{max}},
    \end{aligned}
\end{equation}
where $\Delta r$ is an auxiliary variable that can be interpreted as the unused rate with respect to the arbitrary upper limit~$r^{\text{max}}$ which must satisfy
$r^{\text{max}}\geq\max(C_1, C_2, \dots, C_n)$.
\end{theorem}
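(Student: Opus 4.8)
The plan is to combine the strict convexity of the objective with the bottleneck characterisation of max-min fairness underlying Lemma~\ref{lemma:min_max_bottleneck}. First I would verify that problem~\eqref{eq:opt_max_min_fairness} is well posed: its feasible set is nonempty (\eg $\Delta r = r^{\text{max}}$, \ie $r=0$), compact and convex---an intersection of the polytope $R_f$ with a box---while $\sum_{i\in P}\Delta r_i^2=\|\Delta r\|_2^2$ is strictly convex, so a unique minimiser $\Delta r^{\star}$ exists and it suffices to show that $r^{\star}:=r^{\text{max}}-\Delta r^{\star}$ is max-min fair. Equivalently, $r^{\star}$ is the Euclidean projection of the uniform vector $r^{\text{max}}\mathbf{1}$ onto $R_f$. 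I would also record the consequence of $r^{\text{max}}\ge\max_{\alpha\in N}C_\alpha$: for any node $\alpha$ on circuit~$i$ one has $r_i^{\star}\le\sum_{j\in P_\alpha}r_j^{\star}\le C_\alpha\le r^{\text{max}}$, so the constraint $\Delta r_i^{\star}\ge 0$ is never active; at the optimum only the capacity constraints~\eqref{eq:feasible_rate_capacity_limit} and the non-negativity constraints $r_i^{\star}\ge 0$ can bind.

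Next I would write the KKT conditions. With multipliers $\mu_\alpha\ge0$ for the node-capacity constraints and $\lambda_i\ge0$ for $r_i\ge0$, stationarity reads, for every circuit~$i$,
\[
  2\,\Delta r_i^{\star} \;=\; \sum_{\alpha:\,i\in P_\alpha}\mu_\alpha \;-\; \lambda_i,
\]
together with complementary slackness $\mu_\alpha\bigl(C_\alpha-\sum_{j\in P_\alpha}r_j^{\star}\bigr)=0$ and $\lambda_i\,r_i^{\star}=0$. Since $\Delta r_i^{\star}=r^{\text{max}}-r_i^{\star}>0$, the right-hand side is positive, hence some node $\alpha\in P_i$ has $\mu_\alpha>0$ and is therefore saturated. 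I would then argue that the saturated node on circuit~$i$'s path carrying the \emph{largest} multiplier is a bottleneck in the sense of Definition~\ref{def:bottleneck}: for any competing circuit $j$ through that node its stationarity identity contains the same large $\mu_\alpha$ among its summands, which should force $\Delta r_j^{\star}\ge\Delta r_i^{\star}$, \ie $r_j^{\star}\le r_i^{\star}$. The intuition is that the gradient $2(r^{\text{max}}-r_i^{\star})$ is steeper for slower circuits, so at the optimum a circuit is throttled precisely by a saturated node on which it is already (weakly) the fastest. Once every circuit is shown to possess a bottleneck, the converse direction of the bottleneck characterisation (the classical result of~\cite{bertsekas1992data}) yields that $r^{\star}$ is max-min fair, and uniqueness of both $\Delta r^{\star}$ and of the max-min fair rate vector identifies the two, proving the theorem.

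I expect the main obstacle to be exactly this middle step: converting the first-order conditions into the combinatorial bottleneck structure of Definition~\ref{def:bottleneck}. The delicate points are (i) circuits whose optimal rate is pinned to $0$ by an active non-negativity constraint, for which the path multipliers need not certify a bottleneck and which therefore have to be ruled out or treated separately, and (ii) a rigorous version of the ``largest-multiplier node'' claim, which requires chaining the stationarity identities of all circuits meeting at a node and checking the inequality $\Delta r_j^{\star}\ge\Delta r_i^{\star}$ in every case; one must also confirm that the remaining freedom in $r^{\text{max}}$ (only $r^{\text{max}}\ge\max_{\alpha\in N}C_\alpha$) does not interfere with the argument.
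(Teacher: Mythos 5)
Your preparatory steps are sound (compactness and strict convexity give a unique minimizer $\Delta r^{\star}$, the problem is the projection of $r^{\text{max}}\mathbf{1}$ onto the feasible polytope, and the stationarity identity $2\,\Delta r_i^{\star}=\sum_{\alpha:\,i\in P_\alpha}\mu_\alpha-\lambda_i$ is correct), but the step you yourself flag as delicate is where the argument breaks, and it cannot be repaired. Stationarity makes $\Delta r_i^{\star}$ proportional to the \emph{sum} of the multipliers of all saturated nodes on circuit $i$'s path, so a circuit crossing two saturated nodes accumulates both multipliers and can end up strictly slower than every competitor at every saturated node it traverses---in which case no node on its path satisfies Definition~\ref{def:bottleneck}. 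Concretely, take nodes $\alpha,\beta$ with $C_\alpha=1$, $C_\beta=10$, circuit $1$ constrained only by $\alpha$, circuit $3$ only by $\beta$, circuit $2$ by both, and $r^{\text{max}}=10$. The max-min fair vector is $(1/2,\,1/2,\,19/2)$ with objective value $180.75$, but the unique minimizer of \eqref{eq:opt_max_min_fairness} is $(2/3,\,1/3,\,29/3)$ with objective value $1626/9\approx 180.67$: there $2\Delta r_2^{\star}=\mu_\alpha+\mu_\beta=2\Delta r_1^{\star}+2\Delta r_3^{\star}$, circuit~$2$ is the slowest circuit at both of its saturated nodes, and the allocation violates Definition~\ref{def:max_min_fair} (one can raise $r_2$ at the sole expense of the strictly larger $r_1$). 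So the inequality $\Delta r_j^{\star}\ge\Delta r_i^{\star}$ that you need at the ``largest-multiplier node'' is false in general, and the situation only worsens as $r^{\text{max}}$ grows (the optimizer then drifts toward pure throughput maximization).

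The obstruction is structural rather than a fixable detail of your write-up: minimizing $\sum_i(r^{\text{max}}-r_i)^2$ is maximizing a separable strictly concave utility, and max-min fairness is only the limit of such utility families (the $\alpha\to\infty$ limit of $\alpha$-fairness); a single quadratic program of this form recovers it only in special topologies, e.g.\ when all circuits share one saturated node, as in the paper's toy scenario, where your KKT argument does go through. Note that this paper contains no proof to compare against---it defers Theorem~\ref{theo:max_min_optim} to the earlier conference version---but any such proof must confront exactly the multi-bottleneck configuration above. Your worry (i) about circuits pinned to zero is comparatively harmless; the real gap is (ii), and it is not a gap in your exposition but in the claim itself as stated for general networks.
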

\begin{proof}
	The proof is presented in~\cite{PredicTor}.
\end{proof}
Theorem~\ref{theo:max_min_optim} thus allows us
to obtain the global \emph{max-min fair} rate $r$ of an overlay network
as the solution of a convex optimization problem.
Intuitively, the formulation in Theorem~\ref{theo:max_min_optim} works because we are minimizing the rate that is not allocated, with respect to some arbitrary upper limit and under consideration of feasible rates. The quadratic term results in fairness because it is always desirable to allocate a higher rate (\ie, reduce $\Delta r$) to the circuit with the smallest rate (\ie, with the highest $\Delta r$).

\subsection{Distributed MPC}\label{ssec:feedback_distr_mpc}
In this subsection, we present the preliminaries for the statement of the PredicTor optimal control problem.
In particular, we define states, inputs, and the dynamic system equation
and introduce our concept for distributed MPC.
This includes the question which information is exchanged and how it is incorporated
into the optimal control problem to achieve our previously defined control goals.
For the interaction of multiple nodes, we denote $\alpha \in N$ the currently considered node, with connections to predecessor ($\beta$) and successor ($\gamma$) nodes.
For the current node~$\alpha$, it is irrelevant whether the incoming data comes from several nodes or only from a single node.
To simplify the notation, we assume that all incoming data (even for different circuits)
comes from a single predecessor node~$\beta$ and is forwarded to a single successor node~$\gamma$.
The interaction of multiple controllers is illustrated in Figure~\ref{fig:feedback_exchange}
and will be discussed in the following.

\begin{figure}
	\footnotesize
	\centering
	\includegraphics[width=1\linewidth]{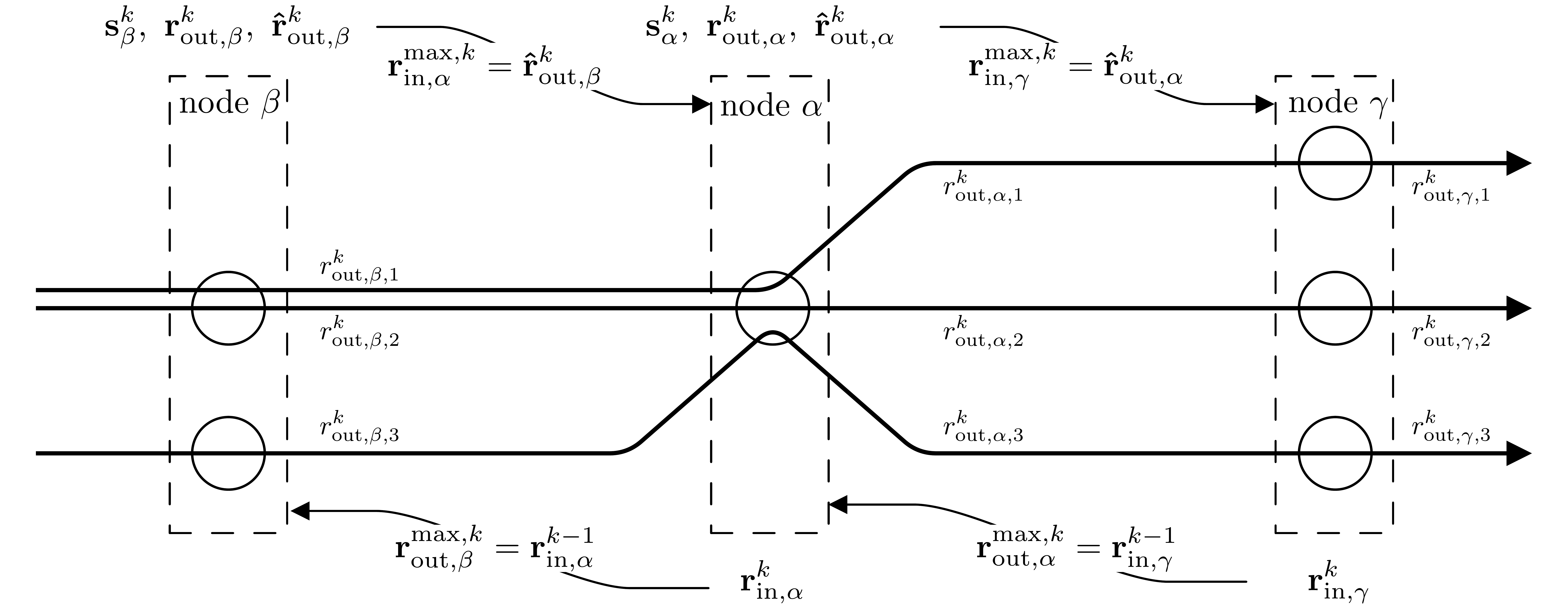}
	\caption{Information exchange (predicted future trajectories in bold font) of node~$\alpha$ with adjacent nodes.
		From the perspective of $\alpha$ all predecessor nodes are summarized as $\beta$ and all successor nodes as $\gamma$,
		for the sake of a concise notation.
		The rates at which data is sent at time~$k$ for circuit~$i$ at node~$\alpha$ is $r_{\text{out},\alpha, i}^k$.
	}
	\label{fig:feedback_exchange}
\end{figure}

The controller at node $\alpha$ takes local decisions regarding incoming and outgoing rates,
under consideration of predicted future actions from the adjacent nodes.
Predictions are obtained on the basis of dynamic models in the form of \eqref{eq:dynamic_system}.
We first introduce the state~$s_{\alpha}^k$ that denotes the queue size
for all circuits $i\in P_{\alpha}$ in node~$\alpha$ and at time step $k$.
The dynamic model equation can be written as:
\begin{align}
\label{eq:mpc_model}
s_{\alpha}^{k+1} &= s_{\alpha}^k + \Delta t(r_{\text{in}, \alpha}^k - r_{\text{out}, \alpha}^k),
\end{align}
where $\Delta t$ denotes the sampling time.
As inputs in \eqref{eq:mpc_model}, we introduce the incoming $r_{\text{in},\alpha}$ and outgoing $r_{\text{out},\alpha}$ rate.
For the optimal control problem, we first introduce trivial constraints for the queue size:
\begin{equation}\label{eq:cons_queue_size}
	0\leq s_{\alpha}^k \leq s_{\alpha}^{\text{max}},
\end{equation}
for the rates:
\begin{subequations}
\begin{align}
	\label{eq:predictor_r_in_positive}
	0 &\leq r_{\text{in},\alpha}^k\\
	\label{eq:predictor_r_out_positive}
	0 &\leq r_{\text{out},\alpha}^k,
\end{align}
\end{subequations}
and for the link capacities:
\begin{subequations}\label{eq:predictor_link_capacity_constraints}
\begin{align}
	\label{eq:predictor_link_capacity_constraints_in}
	\sum_{i\in P_{\alpha}}  &r_{\text{in},\alpha,i}^k  \leq C_{\alpha}^{\text{in}}\\
	\label{eq:predictor_link_capacity_constraints_out}
	\sum_{i\in P_{\alpha}}  & r_{\text{out},\alpha,i}^k \leq C_{\alpha}^{\text{out}}.
\end{align}
\end{subequations}
Furthermore, we have additional constraints regarding the incoming and outgoing rates which depend
on the actions of the adjacent nodes.
From the successor node $\gamma$, we receive information about $\mathbf{r}_{\text{in},\gamma}^k$
which limits the local outgoing rate as follows:
\begin{equation}
	\label{eq:mpc_pred_r_out_max}
	{r}_{\text{out},\alpha}^{k} \leq
	{r}_{\text{out},\alpha}^{\text{max},k}
	={r}_{\text{in},\gamma}^{k-1}.
\end{equation}
Note that we consider $\mathbf{r}_{\text{in},\gamma}^{k-1}$
(the successor's information from the \emph{previous} time step $k-1$)
since the information is delayed.
The constraints for the incoming rate $\mathbf{r}_{\text{in}, \alpha}^k$ are considerably more complex.
From the source node,
we receive the predicted outgoing rate $\mathbf{r}_{\text{out},\beta}^k$,
the predicted circuit queue $\mathbf{s}_{\beta}^k$
and a virtual outgoing rate $\mathbf{\hat r}_{\text{out},\beta}^k$.
With this information, we constrain $\mathbf{r}_{\text{in}, \alpha}^k$ in two ways.
First, we introduce an additional variable $\tilde{s}_{\alpha|\beta}^{k}$, which denotes
the estimated queue size of node $\beta$ from the perspective of node $\alpha$ at time $k$.
To express $\tilde{s}_{\alpha|\beta}^{k}$, we introduce the state variable $\Delta s_{\alpha|\beta}$, such that
\begin{subequations}
	\label{eq:mpc_pred_predecessor}
	\begin{align}
		\tilde{s}_{\alpha|\beta}^{k} &= s_{\beta}^k - \Delta s_{\alpha|\beta}^{k},
		\intertext{with the dynamic system equation in the form of \eqref{eq:dynamic_system}:}
		\Delta s_{\alpha|\beta}^{k+1} &= \Delta s_{\alpha|\beta}^{k} +\Delta t( r_{\text{in},\alpha}^k - r_{\text{out},\beta}^k).
		\label{eq:mpc_update_eq_delta_s}
	\end{align}
\end{subequations}%
Note that in contrast to \eqref{eq:mpc_pred_r_out_max}, we consider information from the source node $\beta$ at the current time-step $k$,
as both the control action and the information are delayed.
Equation~\eqref{eq:mpc_pred_predecessor} states that any value $r_{\text{in},\alpha}^k\neq r_{\text{out},\beta}^k$ will adjust the predicted circuit queue at the predecessor node.
This way, we can indirectly constrain the incoming rate by enforcing
\begin{equation}\label{eq:queue_cons_s_tilde}
	\tilde{\mathbf{s}}_{\alpha|\beta}^{k}\geq 0,
\end{equation}
which ensures that the incoming rate can only be increased as long as data is available in the source node.

A problem arises if the availability of data is not the rate-limiting factor at the source node.
To cope with this situation, we propose a new mechanism for the source node to request a rate increase,
which is incorporated as the second constraint on $\mathbf{r}_{\text{in}, \alpha}^k$.
For this mechanism, we introduce an additional state $\hat s_{\alpha}^k$ and input $\hat r_{\text{out}, \alpha}^k$
with the dynamic model equation:
\begin{align}
	\label{eq:mpc_model_s_hat}
	\hat s_{\alpha}^{k+1} &= \hat s_{\alpha,i}^k + \Delta t(r_{\text{in}, \alpha}^k - \hat r_{\text{out}, \alpha}^k).
\end{align}
State, input and dynamics are reminiscent of \eqref{eq:mpc_model}, with the difference that
$\hat s_{\alpha}^k$  denotes the virtual queue size and
$\hat r_{\text{out}, \alpha}^k$ the virtual outgoing rate.
These variables are virtual in the sense that
they \emph{do not} respect the rate constraint
imposed by the successor node in \eqref{eq:mpc_pred_r_out_max},
but only the local constraints:
\begin{equation}\label{eq:predictor_rate_larger_zero_r_out_hat}
	0 \leq \hat r_{\text{out},\alpha}^k,
\end{equation}
\begin{equation}\label{eq:predictor_link_capacity_constraints_r_out_hat}
	\sum_{i\in P_{\alpha}}  \hat r_{\text{out},\alpha}^k \leq C_{\alpha}^{\text{out}} ,
\end{equation}
and
\begin{equation}\label{eq:cons_queue_size_virtual}
	0\leq s_{\alpha}^k \leq \hat s_{\alpha}^{\text{max}}.
\end{equation}
In this regard, the virtual outgoing rate $\hat r_{\text{out}, \alpha,i}^k$
can be interpreted as the potential of node $\alpha$ to increase the rate for circuit $i$.
Node $\alpha$ receives the respective information from source node $\beta$
as $\mathbf{\hat r}_{\text{out},\beta}^{k}$.
The most straightforward way to consider this information at node $\alpha$ is to enforce:
\begin{equation}
	\label{eq:mpc_pred_r_in_max}
	{r}_{\text{in},\alpha}^{k} \leq
	{r}_{\text{in},\alpha}^{\text{max},k}
	={\hat r}_{\text{out},\beta}^{k},
\end{equation}
which complements \eqref{eq:mpc_pred_r_out_max} for the incoming rates.
In practice, however, we found that with
$
{r}_{\text{in},\alpha}^{\text{max},k}
={\hat r}_{\text{out},\beta}^{k}
$,
the incoming rate should be limited with:
\begin{equation}
	\label{eq:mpc_pred_r_in_max_sum}
	0\leq \sum_{l=0}^{k} \left[r_{\text{in},\alpha}^{\text{max},l} - {r}_{\text{in},\alpha}^{l} \right].
\end{equation}
In most cases, the effect of constraint \eqref{eq:mpc_pred_r_in_max_sum}
is similar to constraint~\eqref{eq:mpc_pred_r_in_max}.
The difference is that~\eqref{eq:mpc_pred_r_in_max}
enforces a concrete upper limit for the incoming rate at time $k$,
whereas~\eqref{eq:mpc_pred_r_in_max_sum} balances the limit over the horizon.
This is beneficial as the sequence $\mathbf{\hat r}_{\text{out},\beta}^{k}$
often contains single elements with very high data rates
which cannot necessarily be fully utilized by the successor at that exact point in time.

\subsection{Optimization Problem}\label{ssec:ocp}
In the following, we propose an optimal control problem for congestion control
with fairness formulation for node~$\alpha$, predecessor node~$\beta$ and successor node $\gamma$.
As optimization variables we introduce the states~$\mathbf{s}_\alpha$, $\Delta \mathbf{s}_{\alpha|\beta}$ and
$\mathbf{\hat s}_{\alpha}$ with their respective dynamics in~\eqref{eq:mpc_model}, \eqref{eq:mpc_update_eq_delta_s} and
\eqref{eq:mpc_model_s_hat}.
Furthermore, we optimize the inputs~$\Delta \mathbf{r}_{\text{in},\alpha}$ and $\Delta \mathbf{r}_{\text{out},\alpha}$ from which rates are determined with:
\begin{align}
	\label{eq:r_in_from_delta_r_in}
	\mathbf{r}_{\text{in},\alpha} &= \mathbf{r}^{\text{max}}-\Delta \mathbf{r}_{\text{in},\alpha}\\
	\label{eq:r_out_from_delta_r_out}
	\mathbf{r}_{\text{out},\alpha} &= \mathbf{r}^{\text{max}}-\Delta \mathbf{r}_{\text{in},\alpha}.
\end{align}
To express the newly introduced variable $\mathbf{\hat r}_{\text{out}, \alpha}$ we introduce
two optimization variables $\Delta \mathbf{r}_{\text{out,extra}}$ and $\mathbf{r}_{\text{out,minus}}$,
such that:
\begin{equation}
	\label{eq:r_hat_out_from_r_in_extra_r_in_minus}
	\mathbf{\hat r}_{\text{out}, \alpha} = \mathbf{r}_{\text{out}, \alpha} + \underbrace{(\mathbf{r}^{\text{max}}-\Delta \mathbf{\hat r}_{\text{out,extra}, \alpha})}_{\mathbf{\hat r}_{\text{out,extra}, \alpha}} - \mathbf{\hat r}_{\text{out,minus}, \alpha}.
\end{equation}
The virtual outgoing rate $\mathbf{\hat r}_{\text{out}, \alpha}$ is thus not chosen independently
but as a deviation from the outgoing rate $\mathbf{r}_{\text{out}, \alpha}$.
This mathematical construction was found to aid the convergence to the global max-min fair solution.

Considering~\eqref{eq:r_in_from_delta_r_in}, \eqref{eq:r_out_from_delta_r_out} and \eqref{eq:r_hat_out_from_r_in_extra_r_in_minus}, we state the OCP:
\begin{samepage}
\begin{small}
\begin{subequations}
\label{eq:mpc_full_optim}
\begin{equation}
\min_{
	\begin{array}{c}
		\mathbf{s}_{\alpha},
		\Delta \mathbf{s}_{\alpha|\beta},
		\mathbf{\hat s}_{\alpha},
		\Delta \mathbf{r}_{\text{out},\alpha},
		\Delta \mathbf{r}_{\text{in},\alpha},\\
		\Delta \mathbf{\hat r}_{\text{out,extra}},
		\mathbf{\hat r}_{\text{out,minus}}
	\end{array}
}\quad
\sum_{k=0}^{N_{\text{horz}}} d^k \left(
(\Delta r_{\text{in},\alpha}^k)^2+ (\Delta r_{\text{out},\alpha}^k)^2
+ (\Delta \hat{r}_{\text{out,extra},\alpha}^k)^2
+ ( \hat r_{\text{out,minus},\alpha}^k)^2
\right)
\tag{\ref{eq:mpc_full_optim}}
\end{equation}
\begin{align}
\text{\textbf{subject to}:}\quad & & \nonumber\\
\text{queue dynamics
\eqref{eq:mpc_model}, \eqref{eq:mpc_update_eq_delta_s}, \eqref{eq:mpc_model_s_hat}:
} \quad &s_{\alpha}^{k+1} = s_{\alpha}^k + \Delta t(r_{\text{in},\alpha}^k - r_{\text{out},  \alpha}^k), & \forall k = 0 ,\dots, N_{\text{horz}}\\
&\Delta s_{\alpha|\beta}^{k+1} = \Delta s_{\alpha|\beta}^{k} +\Delta t( r_{\text{in},\alpha}^k - r_{\text{out},\beta}^k) & \forall k = 0 ,\dots, N_{\text{horz}}\\
&\hat s_{\alpha}^{k+1} = \hat s_{\alpha}^k + \Delta t(r_{\text{in}, \alpha}^k - \hat r_{\text{out}, \alpha}^k) & \forall k = 0 ,\dots, N_{\text{horz}}\\
\text{queue constraints \eqref{eq:cons_queue_size}, \eqref{eq:queue_cons_s_tilde}, \eqref{eq:cons_queue_size_virtual}:
} \quad
&0 \leq s_{\alpha}^k \leq s_{\alpha}^{\text{max}}
\label{eq:mpc_full_optim_sc_max} & \forall k = 0 ,\dots, N_{\text{horz}}\\
&0 \leq \tilde{s}_{\alpha|\beta}^{k} & \forall k = 0 ,\dots, N_{\text{horz}}\\
&0 \leq \hat s_{\alpha}^k \leq s_{\alpha}^{\text{max}}
\label{eq:mpc_full_optim_shat_max} & \forall k = 0 ,\dots, N_{\text{horz}}\\
\text{rate in constraints \eqref{eq:predictor_r_in_positive}, \eqref{eq:predictor_link_capacity_constraints_in}, \eqref{eq:mpc_pred_r_in_max_sum}:
} \quad
&0\leq  r_{\text{in},\alpha}^k  & \forall k = 0 ,\dots, N_{\text{horz}}\\
&\sum_{i\in P_{\alpha}}  r_{\text{in},\alpha,i}^k  \leq C_{\alpha}^{\text{in}} & \forall k = 0 ,\dots, N_{\text{horz}}\\
&0\leq \sum_{l=0}^{k} \left[r_{\text{in},\alpha}^{\text{max},l} - {r}_{\text{in},\alpha}^{l} \right] & \forall k = 0 ,\dots, N_{\text{horz}}\\
\text{rate out constraints \eqref{eq:predictor_r_out_positive}, \eqref{eq:mpc_pred_r_out_max}, \eqref{eq:predictor_link_capacity_constraints_out}:
} \quad
&0\leq  r_{\text{out},\alpha}^k \leq r_{\text{out},\alpha}^{\text{max},k} & \forall k = 0 ,\dots, N_{\text{horz}}\\
&\sum_{i\in P_{\alpha}}   r_{\text{out},\alpha,i}^k \leq C_{\alpha}^{\text{out}} & \forall k = 0 ,\dots, N_{\text{horz}}\\
\text{virt. rate out constraints \eqref{eq:predictor_link_capacity_constraints_r_out_hat}, \eqref{eq:predictor_rate_larger_zero_r_out_hat}:
} \quad
&\sum_{i\in P_{\alpha}}  \hat r_{\text{out},\alpha,i}^k \leq C_{\alpha}^{\text{out}} &  \forall k = 0 ,\dots, N_{\text{horz}}\\
&0\leq  \hat r_{\text{out},\alpha}^k & \forall k = 0 ,\dots, N_{\text{horz}}\\
\text{initial conditions:
} \quad
&s_{\alpha}^0 = s_{\alpha}^{\text{init}},\ \Delta s_{\alpha|\beta}^{0} = 0,\ \hat s_{\alpha}=s_{\alpha}^{\text{init}} &
\end{align}
\end{subequations}
\end{small}
\end{samepage}%
The optimization problem~\eqref{eq:mpc_full_optim} is solved at each time-step and in each node of the network under consideration of the predicted trajectories of adjacent nodes~$\mathbf{r}_{\text{in},\gamma}^{k-1}$,
$\mathbf{r}_{\text{out},\beta}^k$ and $\mathbf{s}_{\beta}^k$,
$\mathbf{\hat r}_{\text{out},\beta}^{k}$
as well as the current size of the circuit queue in the current
node~$s_{\alpha}^{\text{init}}$.
Note that according to \eqref{eq:mpc_pred_r_out_max},
we set $\mathbf{r}_{\text{out},\alpha}^{\text{max},k}
=\mathbf{r}_{\text{in},\gamma}^{k-1}$
and according to \eqref{eq:mpc_pred_r_in_max}
we have $\mathbf{r}_{\text{in},\alpha}^{\text{max}}=\mathbf{\hat r}_{\text{out},\beta}$.

The objective in~\eqref{eq:mpc_full_optim} is motivated by the presented Theorem~\ref{theo:max_min_optim},
but with some important adaptations:
Most notably, we introduced~$\Delta r$ variables for both the incoming and outgoing rates.
Introducing the control variable~$\Delta r_{\text{in},\alpha}$ allows to control the incoming rate.
This  is of significant importance for the desired congestion control as it realizes \textit{backpressure} and data will be stopped from entering the network if it cannot be forwarded.
The quadratic term in~$\Delta r_{\text{out},\alpha}$ ensures that the circuit queue is emptied even if there are no new packets entering the node. 
Moreover, to further avoid congestion, PredicTor is \emph{explicitly} designed to limit the circuit queues through constraint~\eqref{eq:mpc_full_optim_sc_max}.

The objective in~\eqref{eq:mpc_full_optim} is further modified by introducing a discount factor~$d$.
This is necessary because naively implementing our presented fairness formulation also results in fairness along the prediction horizon,
where it is always preferable to increase the rate of the smallest element in a sequence for a given circuit.
In practice, however, we want to send and receive as soon as possible as long as \emph{instantaneous} fairness is achieved.
In the appendix of~\cite{PredicTor}, we present a guideline on how to choose~$d$ to obtain the desired behavior.

Note that the theoretical analysis of the feasibility and stability of the proposed MPC controller
in~\eqref{eq:mpc_full_optim} is out of the scope of this work.
We however found that the proposed controller is stable and feasible
for all investigated simulation studies.

\subsection{Controller Integration}\label{ssec:controller_tor_interact}

The proposed controller is implemented on the application layer of each node in the Tor network.
At each time step, problem~\eqref{eq:mpc_full_optim} is solved
with the most recent measurement of the circuit queue~$s_{\alpha}^{\text{init}}$
of the current node~$\alpha \in N$
and with the received information from adjacent nodes.
The optimal solution of~\eqref{eq:mpc_full_optim} is converted to trajectories
of incoming ($\mathbf{r}_{\text{in},\alpha}$) and outgoing ($\mathbf{r}_{\text{out},\alpha}$) rates,
where the first element of~$\mathbf{r}_{\text{out},\alpha}$
is used to control at which rate data is sent.
In particular, we employ a token-bucket method~\cite{bertsekas1992data}
to shape the outgoing traffic.
Note that we are controlling the data rates
on a per-circuit basis, which is similar to
how PCTCP~\cite{alsabah2013pctcp} changes Tor's circuit handling.

In order to exchange the trajectories between relays,
we extend the Tor protocol with respective control messages.
This is technically possible due to the extensibility of the Tor~protocol
that allows the definition of new cell types.
For the edges of each circuit that do not have a predecessor or successor to exchange data,
we provide reasonable, synthetic trajectories to bootstrap the data transfer
and behave accordingly.
For example, the first node in a circuit reads data from its source
according to its computed incoming rate.
While PredicTor is generally agnostic to the underlying transport protocol,
we implement it using TCP as a reliability mechanism,
to avoid packet loss and packet reordering.

\section{Security and Privacy Considerations} \label{sec:security}

Extending a widespread anonymity networks like Tor, in a fundamental way as PredicTor does,
requires great care to avoid reducing the security level
and putting the users' privacy to danger.
While this section does not constitute a formal, complete security analysis,
it aims to give an intuition of how PredicTor would interfere
with the existing security guarantees and privacy level in Tor.
In order to do so, we first recall Tor's adversary model
and afterwards review typical attacks on Tor, analyzing how PredicTor relates to them.
For this, we make use of the categorization of Tor attacks
presented in~\cite{DBLP:journals/csur/AlSabahG16}.
Still, PredicTor was primarily designed
for exploring new directions towards multi-hop congestion control
and does not claim to fully satisfy the security requirements needed for production use.

\paragraph{Adversary Model}

The Tor network aims to protect the privacy of its users.
More specifically, it obfuscates their IP addresses
by relaying the traffic over circuits of intermediate hops.
In this setting, Tor assumes the following adversary~\cite{dingledine2004tor}:
An attacker may control a certain fraction of the network, including underlay links and relays.
However, Tor does not protect against a global adversary that can monitor the whole network.
Consequently, Tor aims to protect against local adversaries,
but cannot currently avoid end-to-end traffic confirmation attacks.

One design aspect that goes hand-in-hand with this assumption
is that Tor does not rely on any specific trust relationship between relays.
In particular, the Tor protocol tries to make it as hard as possible
for cooperating malicious relays to de-anonymize users.
One essential building block for this is the use of telescope-like onion encryption,
such that each a relay within a circuit can only see its immediate predecessor and successor.
Payload data, including the target address,
is only visible end-to-end between the client and the exit.
All other information for building and maintaining circuits is only visible hop-by-hop,
so the insight every single relay can gain about the overall circuit is minimized.
The design of PredicTor's network protocol is in line with this general concept:
feedback data is only exchanged between adjacent relays,
utilizing Tor's existing cryptography and trust assumptions.

\paragraph{Information Leakage from Feedback Messages}

Although feedback messages themselves are only exchanged between directly neighboring relays,
one can argue that parts of the information they contain
actually trickle over more than one hop.
This is because feedback information is taken into account by the optimizer
and therefore implicitly influences the feedback information
which is, in turn, sent to other relays after the optimization step.
We cannot see how such information could be exploited by malicious relays,
but cannot prove this kind of information leakage irrelevant, either.
This would require a more in-depth analysis, which is subject to future work.

Our initial assessment, however, can be summarized as follows:
The potential danger would be that local relays
could learn more about the overall circuit than before,
\eg, inferring parts of the circuit topology.
For such inference, obtaining the exact number of circuits handled by another relay
may be useful to the adversary.
Feedback messages in PredicTor contain per-circuit trajectories,
but only for the circuits that are multiplexed between any two adjacent relays.
The same piece of information is already available to vanilla Tor relays for circuit handling.
Obtaining the overall number of circuits handled by another relay would, just as before,
require heuristics that make use of additional information,
such as the data rate advertised in the network consensus.
We therefore do not think that PredicTor would facilitate such inference.
The same is true for the exchanged data rates.
Since these are propagated between relays after optimization,
malicious relays could try to infer characteristics of relays at the far end of the circuit,
\eg, ruling out relay candidates that would lead to different bottleneck values.
However, such inference could also be carried out as of today,
simply by locally measuring the throughput achieved per circuit.

\paragraph{Traffic Confirmation Attacks}

The most fundamental vulnerability of Tor is
its susceptibility to timing-based traffic confirmation attacks
that could either be carried out by a global adversary or by an adversary
that controls, both, the entry and the exit relay.
In general, low-latency anonymity networks cannot prevent traffic confirmation attacks.
However, even if such attacks cannot be prevented entirely,
their difficulty depends largely on the temporal correlation
between data entering and leaving the network.
Consequently, better and more predictable performance, as is achieved by PredicTor,
may facilitate traffic confirmation attacks.
This is an inherent challenge every performance enhancement for Tor has to face.
On the other hand, better performance bears the potential to attract a broader user base,
strengthening the anonymity set,
so it is not entirely clear whether the overall privacy level would be harmed.
The precise interdependencies between these factors are one of our main future research directions.

\paragraph{Routing and Circuit Selection}

Adversaries may be able to increase their chances
of successfully attacking a larger portion of the user base
by taking into account the circuit selection process that is carried out by the Tor clients.
Likewise, if adversaries not only control individual relays, but Autonomous Systems~(AS),
this can be problematic.
PredicTor, however, is fully orthogonal and does not touch circuit selection or routing.
Therefore, it does not increase vulnerability to these attacks.

\paragraph{Website Fingerprinting and Watermarking}

One attack vector that has extensively been researched in the past
works by inferring communication contents by analyzing the timing of the encrypted data traffic.
Closely related, there are watermarking techniques
that actively insert traffic pattern peculiarities
in order to make flows more easily recognizable.
We argue that both attack strategies are either not touched by PredicTor
or even become more difficult because the explicit traffic handling defined by PredicTor
results in smoother and more homogeneous traffic patterns on the wire,
as we will show in Section~\ref{sec:eval-small-network}.

\paragraph{Congestion Attacks and Denial of Service~(DoS)}

Attackers can try to divert target traffic from relays they do not control
to their own ones by artificially congesting parts of the network
up to the point where benign relays cannot continue operation~(DoS).
As far as traffic congestion is concerned,
PredicTor is likely more resilient to such attacks than vanilla Tor,
because it handles situations of congestion or load much more explicitly.
Attackers cannot as easily flood the network,
because the controllers running at each relay optimize the traffic flows
to keep queues low.
On the other hand, we suspect PredicTor to be prone to DoS~attacks
that specifically target the optimizer.
Although the underlying optimization problem can be solved efficiently,
the computational effort still depends on the number of circuits involved.
Therefore, an attacker could trigger increased resource usage
by constructing tailored circuits that increase the difficulty of PredicTor's optimization problem.

\vspace{\baselineskip}
All in all, our initial security and privacy assessment
shows that future work would be necessary to ensure
that PredicTor does not introduce new attack vectors.
On the other hand, it can also help mitigating several existing attacks.

\section{Evaluation and Discussion} \label{sec:eval}

Evaluating PredicTor on the live Tor network is prohibitive,
due to its sensitive nature of touching users' anonymity.
Instead, we here investigate its behavior by carrying out simulation studies.

It should be noted that, in its current form,
PredicTor is not meant for immediate real-world application on the Tor network.
Instead, it constitutes a concept study opening a novel research direction
towards realizing congestion control in complex networks.
Therefore, all of our experimentation has the goal
of maximizing the \emph{understanding} of this new approach,
applying distributed model predictive control for congestion control.
We aim to clearly point out benefits and potential drawbacks of such strategies.
In particular, our evaluation covers the PredicTor controller's detailed behavior and its implications.
We look at the isolated behavior of single controllers as well as
the overall system behavior that emerges from the cooperative interaction
of multiple controllers.
To put the results into context, we compare PredicTor to vanilla Tor
as well as PCTCP~\cite{alsabah2013pctcp}, an alternative circuit handling strategy for Tor.
From these observations, we deduce insight about the benefits, inherent limitations,
as well as the expected applicability of such approaches.

Our evaluation strategy is twofold: First, we analyze PredicTor's behavior
in small toy scenarios that allow us to better understand its behavior in situations
that are simple enough to be investigated by hand.
By doing so, we establish an intuition of its behavior.
To this end, we first investigate the working of a single, isolated PredicTor controller
before setting up multiple controllers to cooperate.
This way, we can analyze the behavior of a single, isolated PredicTor controller
as well as the interaction between multiple controllers.

At a second step, we take our evaluation further by scaling up our experiments to more complex networks.
This serves as a means of exploring the applicability of PredicTor in more realistic scenarios.
On the one hand, we thus verify whether PredicTor's claimed benefits
do also exist in scenarios that are not as easy to understand as the toy scenarios before.
On the other hand, we investigate to what extent the underlying assumptions made in PredicTor
collide with reality and what implications this has for further research in the field.
Lastly, we evaluate how PredicTor handles different traffic patterns (bulk and web traffic).

\paragraph{Implementation} \label{sec:implementation}

As laid out before, PredicTor comprises, on the one hand, the controller logic
that uses model predictive control to find the optimal data rates
based on the implemented system model and optimization objectives.
On the other hand, these control decisions have to be realized in the network
and the predicted trajectories have to be exchanged with other relays.
The structure of our prototype implementation of PredicTor also exhibits these two main tasks.
We implement the PredicTor core model in Python, utilizing CasADi~\cite{Andersson2018}
in combination with IPOPT~\cite{Andreas2006} and the MA27\footnote{HSL.
A collection of Fortran codes for large scale scientific computation. \texttt{http://www.hsl.rl.ac.uk/}}
linear solver for fast state-of-the-art optimization.
For implementing the network behavior, we embed it into \texttt{nstor}~\cite{tschorsch16bktap},
an implementation of Tor for the ns-3~network simulator.\footnote{\texttt{https://www.nsnam.org/}}
Our implementation thus covers PredicTor, vanilla Tor and PCTCP.
PCTCP differs from vanilla~Tor in that it establishes a separate connection per circuit,
instead of multiplexing them.
While the overall network simulation is carried out by \texttt{nstor},
each simulated relay has access to the controller code library for carrying out its local optimizations.
The results are then used as the Tor relay's scheduling strategy in the network simulation,
More specifically, this means throttling data transmission
based on the controller's optimization output and exchanging the predicted trajectories between relays.
Our implementation is publicly available online.\footnote{\texttt{https://github.com/cdoepmann/PredicTor}}

The results presented in the following are obtained
with a discount factor of $d = \frac{1}{3}$,
as discussed in the appendix of~\cite{PredicTor}.
For the prediction horizon, we choose $N_{\text{horz}}=10$

\paragraph{Metrics} \label{sec:metrics}

In order to assess PredicTor's performance,
we quantify different parameters that are relevant for comparing it to existing approaches.
For each of these parameters, we initially evaluate the steady state behavior.
First of all, we consider the \emph{latency} of the data transfer.
Apart from the physical underlay latency, which denotes a natural lower bound,
latency stems primarily from the existence of buffers and queues in the network.
Since the reduction of queue sizes is an explicit optimization goal of PredicTor,
we expect a considerable enhancement in this regard.
We define our notion of latency as follows:
For each data transfer through the network (running over a circuit of multiple relays),
we define latency as the difference in time between
when it reaches its destination and when it entered the (overlay) network.
We therefore focus explicitly on the Tor network itself
and disregard additional latency that may occur, \eg,
for the communication between the exit relay and an outside webserver.
This approach ensures that we capture the two important consequences of latency:
Firstly, the impact on the user who experiences additional waiting time
before seeing the response to her request.
And secondly, large queue sizes also mean a higher load on the network itself.
Therefore, small queues---and thus, lower latency---are desirable
also from a network perspective in reducing congestion.
We refer to the total amount of data that is present in the network at any point in time as \emph{backlog}.
For latency, we employ a byte-wise perspective.
That is, we precisely track for each payload byte the times of sending and receiving
and aggregate them into an overall value by taking the average.

Another relevant metric is the \emph{throughput} that is achieved by each of the circuits.
On the one hand, it expresses how well the network is utilized.
On the other hand, we can use these characteristics to define a notion of \emph{fairness} in the network.
As explained before, PredicTor aims at establishing max-min fairness for all circuits,
which Tor is not currently capable of.
Given a specific topology of circuits and relays,
we calculate the optimal max-min fair data rate distribution for this scenario.
We can then compare the observed values from our simulations to this optimum in two ways:
For an in-depth analysis of the resulting data rate distributions,
we visualize them as cumulative distribution functions in a CDF plot.
On the other hand, however, if we only want a single value to express the degree of fairness,
we make use of the following construction:
Let $r^f_1, \dots, r^f_n$ be the max-min fair data rate distribution,
and $r_1, \dots, r_n$ be the observed data rates.
We then define the fairness index $F$ as follows:
\[
F = 1 - \frac{\sum_{i=1}^n \vert r^f_i - r_i \vert}{\sum_{i=1}^n r^f_i}
\]
Put differently, $F$ denotes the share of traffic that behaves according to max-min fairness.
While other established fairness measures exist,
most notably Jain's fairness index~\cite{jains-fairness-index},
they are not applicable here.
Jain defines fairness as a uniform allocation of a resource.
Since we make use of max-min fairness, however,
a uniform data rate distribution is not necessarily the optimal choice.

\paragraph{Limitations} \label{sec:limitations}

One of our main simplifications includes that
we do not simulate the transmission of feedback messages over the wire,
but emulate their exchange \enquote{out of band}.
Since feedback traffic is independent of the network speed,
running the simulations at different simulated data rates,
one could achieve arbitrary goodput-overhead ratios, which we refrain from doing.
Moreover, our intention is to evaluate the scheduling behavior itself as a baseline
instead of capturing artifacts that stem purely from implementation details such as packet format.
However, we note (and later discuss) that, in real networks,
feedback overhead would constitute a severe issue due to its linear growth with the number of circuits.
Moreover, for this prototype, we base the data transfer on TCP (like Tor and PCTCP)
instead of introducing a tailored transport protocol.
PredicTor thus merely takes the role of a scheduler.
We will later discuss that this approach can still be beneficial for robustness.

\subsection{Single Controller}

We first present an investigation of the decision-making process of PredicTor's
proposed controller from~\eqref{eq:mpc_full_optim} for a single node.
In order to highlight several interesting aspects of its behavior,
we investigate an \emph{open-loop prediction}.
This means that we solve the optimal control problem~\eqref{eq:mpc_full_optim} once
and display the resulting predicted future trajectories.
In the \emph{closed-loop control} application, these predictions will change repeatedly, as new information becomes available.

For the investigation, we consider the small-scale topology presented in Figure~\ref{fig:example-topology}.
The topology consists of six relays handling a total of three circuits.
All of the three circuits meet in a shared bottleneck.
Additionally, two of these circuits originate from the same sending relay.
The scenario therefore demonstrates a simple congestion situation.
We focus on the controller at the bottleneck and
denote $\alpha$ the current node, $\beta$ its predecessors and $\gamma$ its successors.

\begin{figure}
    \definecolor{color0}{rgb}{0.12156862745098,0.466666666666667,0.705882352941177}
    \definecolor{color1}{rgb}{1,0.498039215686275,0.0549019607843137}
    \definecolor{color2}{rgb}{0.172549019607843,0.627450980392157,0.172549019607843}
    \begin{tikzpicture}[font=\scriptsize]
      \node (v) {circuit 1};
      \draw[color=color0,very thick] ($(v.east) + (0.15cm,0cm)$) -- +(0.7cm,0cm);

      \node[right=1.75cm of v] (s) {circuit 2};
      \draw[color=color1,very thick] ($(s.east) + (0.15cm,0cm)$) -- +(0.7cm,0cm);

      \node[right=1.75cm of s] (n) {circuit 3};
      \draw[color=color2,very thick] ($(n.east) + (0.15cm,0cm)$) -- +(0.7cm,0cm);
    \end{tikzpicture} \par
    \tikzset{
    circuit/.style={line width=3.5pt},
    written/.style={text opacity=0},
    }
    \def\xfactor{0.6}
    \hfill
    \scalebox{0.4}{
    \begin{tikzpicture}[]
        \drawtopology
        \node[draw,fit=(sender 1) (sender 2),dashed,ultra thick] {};
    \end{tikzpicture}
    }
    \hfill
    \scalebox{0.4}{
    \begin{tikzpicture}[]
        \drawtopology
        \coordinate (incoming) at ($(sender 1.center)!0.5!(bottleneck.center)$);
        \draw[dashed,ultra thick] ($(incoming) + (-1.5em,+1.5em)$) rectangle ($(incoming) + (+1.5em,-4.5em)$);
    \end{tikzpicture}
    }
    \hfill
    \scalebox{0.4}{
    \begin{tikzpicture}[]
        \drawtopology
        \node[draw,fit=(bottleneck),dashed,ultra thick] {};
    \end{tikzpicture}
    }
    \hfill
    \scalebox{0.4}{
    \begin{tikzpicture}[]
        \drawtopology
        \coordinate (outgoing) at ($(receiver 2.center)!0.5!(bottleneck.center)$);
        \draw[dashed,ultra thick] ($(outgoing) + (-1.5em,+4.0em)$) rectangle ($(outgoing) + (+1.5em,-4.5em)$);
    \end{tikzpicture}
    }
    \hfill
    \par\vspace{0.5em}
	\pgfplotsset{width=0.29\textwidth, height=0.35\textwidth}
	\pgfplotsset{
		tick label style={font=\scriptsize\sffamily},
		label style={font=\scriptsize\sffamily},
		y label style={yshift = -0.01\textwidth},
		title style={font=\scriptsize\bfseries\sffamily},
		legend style={font=\tiny},
		group/horizontal sep=1.1cm,
		group/vertical sep = 0.8cm,
	}
	\input{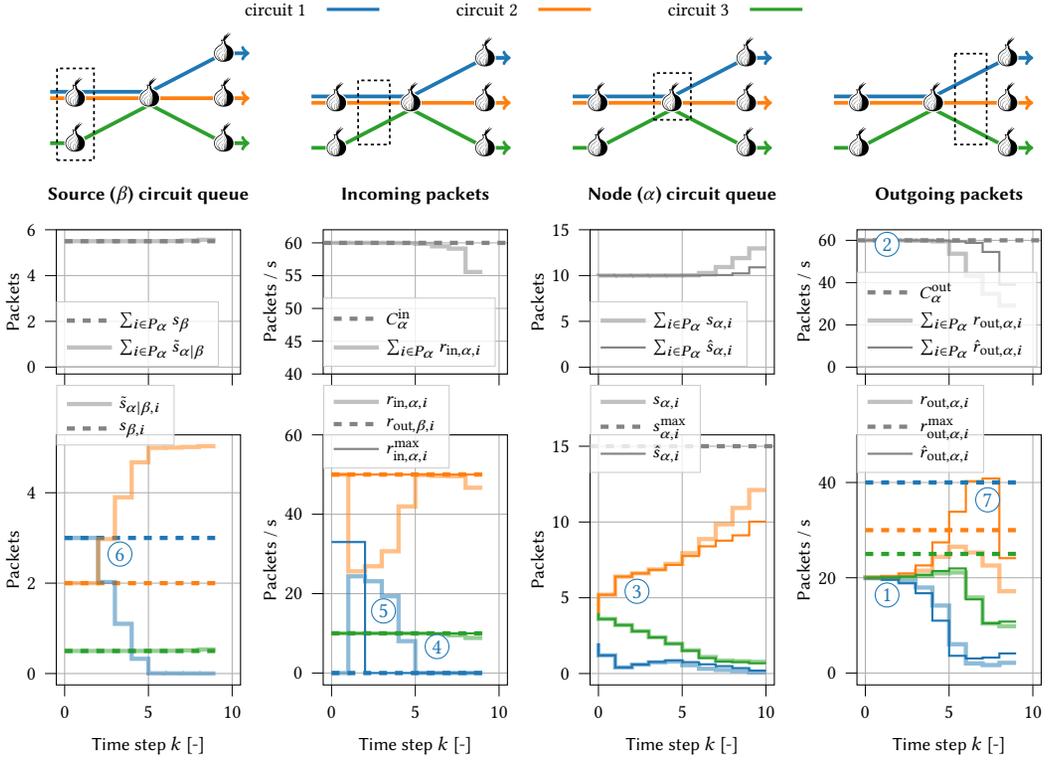}
	\caption{Open-loop simulation of the MPC predictions
		made by the controller in the bottleneck relay of our sample topology.}
	\label{fig:open_loop_pred}
\end{figure}
We investigate a synthetic scenario
with manually-defined trajectories from the adjacent nodes.
In particular, we define for the predecessor node
$\mathbf{r}_{\text{out},\beta}^k$,
$\mathbf{\hat r}_{\text{out},\beta}^k$,
and $\mathbf{s}_{\beta}^k$.
For the successor node we define
$\mathbf{r}_{\text{in},\gamma}^{k-1}$.
We set $\mathbf{r}_{\text{out},\alpha}^{\text{max},k}
=\mathbf{r}_{\text{in},\gamma}^{k-1}$
and $\mathbf{r}_{\text{in},\alpha}^{\text{max},k}=\mathbf{\hat r}_{\text{out},\beta}^k$.
Additionally, the initial buffer size $s_{\alpha}^{\text{init},k}$  is required.
Based on this information, the solution of \eqref{eq:mpc_full_optim} allows to compute the trajectories
$\mathbf{r}_{\text{in},\alpha}^k$,
$\mathbf{r}_{\text{out},\alpha}^k$,
$\mathbf{s}_{\alpha,i}^k$,
$\mathbf{\hat{s}}_{\alpha}^k$ and
$\mathbf{\tilde{s}}_{\alpha|\beta}^k$.
The trajectories are displayed in Figure~\ref{fig:open_loop_pred} and will be discussed in the following.

The most important decision of the PredicTor controller is with respect to the outgoing rates~$\mathbf{r}_{\text{out},\alpha}$.
The first element of this sequence determines the rate at which data is sent until the next sampling instance.
In Figure~\ref{fig:open_loop_pred},
we can see at~\circled{1} that all circuits obtain the same rate at the beginning of the sequence.
At this point in time the rate is only limited by the node capacity~$C_{\alpha}^{\text{out}}$, which can be seen in~\circled{2}.
Over the course of the prediction horizon, the outgoing rate for circuit $2$ is increasing,
whereas the rates for circuits~1 and 3 is decreasing.
This behavior is due to the buffer sizes for the circuits, shown in~\circled{3}.
Here we can see that even with an increasing rate,
the buffer for circuit~2 is growing whereas that for circuits~1 and 3 is approaching zero.
With the buffer size for circuits~1 and 3 close to zero, the outgoing rate for these circuits
approaches the incoming rates~$\mathbf{r}_{\text{in},\alpha}$, shown in~\circled{4},
at the end of the horizon.
Under stationary conditions, the incoming rates~$\mathbf{r}_{\text{in},\alpha}$ are typically equivalent
to the predicted outgoing rates~$\mathbf{r}_{\text{out},\beta}$
of the source node $\beta$, which can also be seen in~\circled{4} for circuit~3.
For circuit~1, however, the algorithm determines to
increase
$\mathbf{r}_{\text{in},\alpha,1}$
with respect to
$\mathbf{r}_{\text{out},\beta,1}$.
This can be seen in~\circled{5}.
When increasing the incoming rate, PredicTor needs to consider
$\mathbf{r}_{\text{in},\alpha}^{\text{max}}$
and the respective constraint in~\eqref{eq:mpc_pred_r_in_max_sum}.
Note that this constraint is not enforced at each time-step with
$\mathbf{r}_{\text{in},\alpha}\leq \mathbf{r}_{\text{in},\alpha}^{\text{max}}$
but the limit increases at each time-step with
$\mathbf{r}_{\text{in},\alpha}^{\text{max}}\geq 0$.
This can also be seen for circuit~1 in~\circled{5}.

Since the algorithm determines to increase the incoming rate
$\mathbf{r}_{\text{in},\alpha,1}$
with respect to the prediction of the source node
$\mathbf{r}_{\text{out},\beta,1}$,
it also predicts a deviation in the buffer size at the source node.
In \circled{6}, we can see that originally the buffer size $\mathbf{s}_{\beta,1}$ for circuit 1
is predicted to be constant over the horizon.
The obtained trajectory $\tilde{s}_{\alpha|\beta,1}$ takes into consideration
the increased $\mathbf{r}_{\text{in},\alpha,1}$ and predicts
that the buffer for circuit 1 at node $\beta$ will be emptied at time-step $k=5$.
Consequently, we can see at \circled{4} that the rate $r_{\text{in},\alpha,1}^{k+5}$ is zero.

As a final aspect, we want to mention $\mathbf{\hat r}_{\text{out},\alpha}$.
This virtual outgoing rate is different to $\mathbf{ r}_{\text{out},\alpha}$ as it does not consider the
constraint $\mathbf{r}_{\text{out},\alpha}\leq \mathbf{ r}_{\text{out},\alpha}^{\text{max}}$.
This can be seen in \circled{7} for circuit 2.
The purpose of this virtual rate is to request from the successor node an increase in
$\mathbf{ r}_{\text{out},\alpha}^{\text{max}}$, such that at a later iteration the true rate can be increased.

In summary, we find that the solution of the PredicTor problem
(see~\eqref{eq:mpc_full_optim})
leads to sound and interpretable behavior in terms of its predicted future trajectories.
For the synthetic scenario, it can be seen that the controller attempts to achieve fairness and to avoid congestion,
while utilizing the available resources of the network.
The results allow no conclusions regarding performance, however,
as only a single controller in an open-loop solution is investigated.

\subsection{Interaction of Multiple Controllers} \label{sec:eval-small-network}

As a next step, we carried out a full simulation of the sample network.
This now includes not only the isolated controllers,
but also the interaction between them as well as the application logic and network stack behavior.
To this end, ns-3 allows us to achieve a high degree of realism
by emulating the network down to the physical layer,
including queuing effects, packet loss, etc.
We refer to this kind of simulation as \emph{closed-loop},
because each MPC step takes into account the current state of the system,
determined from measurements and information exchange between relays.
For this simple setup, we define all underlay links to have the same, constant latency.
We will lift this assumption in later experiments.
We compare the performance of PredicTor to vanilla Tor and PCTCP.

We investigate two scenarios that differ slightly by circuit behavior:
The three circuits start at slightly different times, purely for easier visualization.
In scenario~1, circuits~1 and~3 have an infinite source of packets to forward,
whereas circuit~2 stops and restarts twice during the simulation.
This allows us to better investigate how PredicTor assigns data rates to each of the circuits.
In scenario~2, all circuits have an infinite source of packets.
We use this setup for comparing the absolute values of achieved data rates more easily.

\definecolor{color0}{rgb}{0.12156862745098,0.466666666666667,0.705882352941177}
\definecolor{color1}{rgb}{1,0.498039215686275,0.0549019607843137}
\definecolor{color2}{rgb}{0.172549019607843,0.627450980392157,0.172549019607843}

\begin{figure}
    \pgfplotsset{width=.38\textwidth,height=.275\textwidth}
    \pgfplotsset{
    tick label style={font=\scriptsize\sffamily},
    label style={font=\scriptsize\sffamily},
    title style={font=\scriptsize\bfseries\sffamily},
    legend style={font=\footnotesize},
    group/horizontal sep=0.5cm,
    }
    \begin{tikzpicture}[font=\scriptsize]
    \node (v) {circuit 1};
    \draw[color=color0,very thick] ($(v.east) + (0.15cm,0cm)$) -- +(0.7cm,0cm);

    \node[right=1.75cm of v] (s) {circuit 2};
    \draw[color=color1,very thick] ($(s.east) + (0.15cm,0cm)$) -- +(0.7cm,0cm);

    \node[right=1.75cm of s] (n) {circuit 3};
    \draw[color=color2,very thick] ($(n.east) + (0.15cm,0cm)$) -- +(0.7cm,0cm);
    \end{tikzpicture} \par

    \input{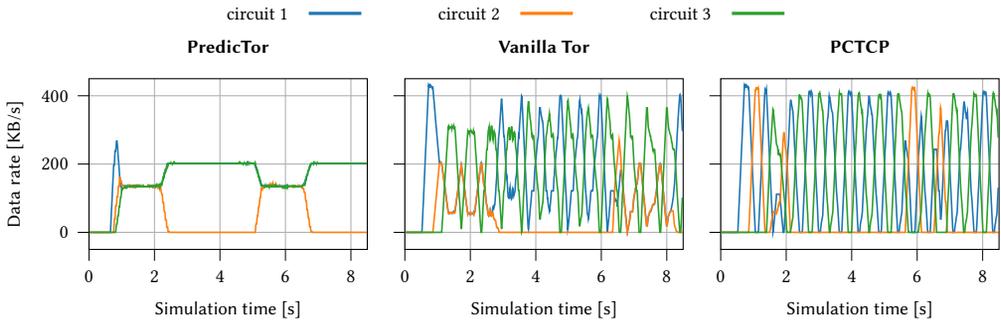}
	\caption{Data rates of the circuits in the sample topology (scenario~1) over time.}
	\label{fig:multiple-controllers-data-rate}
\end{figure}

Figure~\ref{fig:multiple-controllers-data-rate} shows the data rates of the individual circuits
in scenario~1 over the course of the simulation time.
These per-circuit values were measured in ns-3 by recording the outgoing rates at the bottleneck.
We can see that PredicTor exhibits a desirable behavior with constant,
sustainable rates and smooth transitions when circuit~2 stops and restarts.
Fair behavior can be observed in these transitions:
All circuits share the same rate during activity
and circuits~1 and~3 are allocated the same, higher rate when circuit~2 stops sending.
The sum of all rates is visibly constant over time.
On the other hand, vanilla Tor and PCTCP show erratic, oscillatory behavior
where bursts are followed by very low rates,
while the individual circuits take turns sending data.
Fairness cannot be assessed visually for Tor and PCTCP,
which is why we quantify it later. %

\begin{figure}
    \pgfplotsset{width=.38\textwidth,height=.275\textwidth}
    \pgfplotsset{
    tick label style={font=\scriptsize\sffamily},
    label style={font=\scriptsize\sffamily},
    title style={font=\scriptsize\bfseries\sffamily},
    legend style={font=\footnotesize},
    group/horizontal sep=0.5cm,
    }
    \begin{tikzpicture}[font=\scriptsize]
    \node (v) {circuit 1};
    \draw[color=color0,very thick] ($(v.east) + (0.15cm,0cm)$) -- +(0.7cm,0cm);

    \node[right=1.75cm of v] (s) {circuit 2};
    \draw[color=color1,very thick] ($(s.east) + (0.15cm,0cm)$) -- +(0.7cm,0cm);

    \node[right=1.75cm of s] (n) {circuit 3};
    \draw[color=color2,very thick] ($(n.east) + (0.15cm,0cm)$) -- +(0.7cm,0cm);
    \end{tikzpicture} \par

    \input{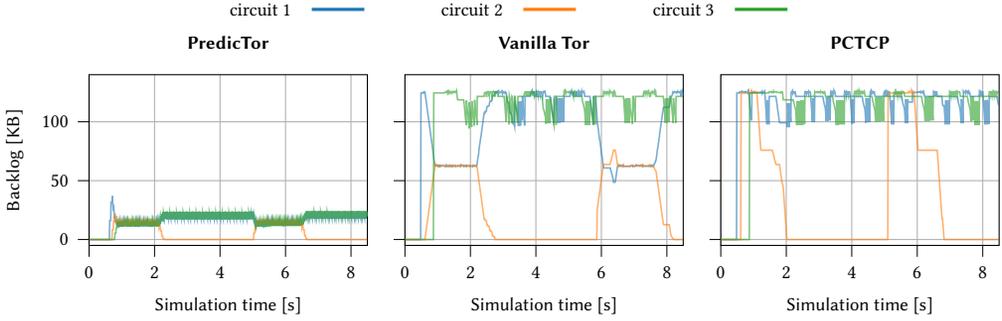}
    \vspace{-0.5em}
	\caption{Backlog of the circuits in the sample topology (scenario~1) over time.}
	\label{fig:multiple-controllers-backlog}
\end{figure}
\begin{figure}
    \pgfplotsset{width=.38\textwidth,height=.275\textwidth}
    \pgfplotsset{
    tick label style={font=\scriptsize\sffamily},
    label style={font=\scriptsize\sffamily},
    title style={font=\scriptsize\bfseries\sffamily},
    legend style={font=\footnotesize},
    group/horizontal sep=0.5cm,
    }
    \begin{tikzpicture}

\definecolor{color0}{rgb}{0.12156862745098,0.466666666666667,0.705882352941177}

\begin{groupplot}[group style={group size=3 by 1}]
\nextgroupplot[
tick align=outside,
tick pos=left,
title={PredicTor},
x grid style={white!69.0196078431373!black},
xlabel={Latency [ms]},
xmajorgrids,
xmin=0, xmax=1000,
xtick style={color=black},
y grid style={white!69.0196078431373!black},
ylabel={Amount of data [KB]},
ymajorgrids,
ymin=0, ymax=3000,
ytick style={color=black}
]
\draw[draw=none,fill=color0] (axis cs:60,0) rectangle (axis cs:89.5,259.5);
\draw[draw=none,fill=color0] (axis cs:90,0) rectangle (axis cs:119.5,2352);
\draw[draw=none,fill=color0] (axis cs:120,0) rectangle (axis cs:149.5,609);
\draw[draw=none,fill=color0] (axis cs:150,0) rectangle (axis cs:179.5,12.5);

\nextgroupplot[
scaled y ticks=manual:{}{\pgfmathparse{#1}},
tick align=outside,
tick pos=left,
title={Vanilla Tor},
x grid style={white!69.0196078431373!black},
xlabel={Latency [ms]},
xmajorgrids,
xmin=0, xmax=1000,
xtick style={color=black},
y grid style={white!69.0196078431373!black},
ymajorgrids,
ymin=0, ymax=3000,
ytick style={color=black},
yticklabels={}
]
\draw[draw=none,fill=color0] (axis cs:60,0) rectangle (axis cs:89.5,3);
\draw[draw=none,fill=color0] (axis cs:90,0) rectangle (axis cs:119.5,12.5);
\draw[draw=none,fill=color0] (axis cs:120,0) rectangle (axis cs:149.5,12.5);
\draw[draw=none,fill=color0] (axis cs:150,0) rectangle (axis cs:179.5,12.5);
\draw[draw=none,fill=color0] (axis cs:180,0) rectangle (axis cs:209.5,12.5);
\draw[draw=none,fill=color0] (axis cs:210,0) rectangle (axis cs:239.5,13);
\draw[draw=none,fill=color0] (axis cs:240,0) rectangle (axis cs:269.5,12.5);
\draw[draw=none,fill=color0] (axis cs:270,0) rectangle (axis cs:299.5,13);
\draw[draw=none,fill=color0] (axis cs:300,0) rectangle (axis cs:329.5,151.5);
\draw[draw=none,fill=color0] (axis cs:330,0) rectangle (axis cs:359.5,26.5);
\draw[draw=none,fill=color0] (axis cs:360,0) rectangle (axis cs:389.5,25.5);
\draw[draw=none,fill=color0] (axis cs:390,0) rectangle (axis cs:419.5,14);
\draw[draw=none,fill=color0] (axis cs:420,0) rectangle (axis cs:449.5,13.5);
\draw[draw=none,fill=color0] (axis cs:450,0) rectangle (axis cs:479.5,14);
\draw[draw=none,fill=color0] (axis cs:480,0) rectangle (axis cs:509.5,46.5);
\draw[draw=none,fill=color0] (axis cs:510,0) rectangle (axis cs:539.5,83.5);
\draw[draw=none,fill=color0] (axis cs:540,0) rectangle (axis cs:569.5,395);
\draw[draw=none,fill=color0] (axis cs:570,0) rectangle (axis cs:599.5,495.5);
\draw[draw=none,fill=color0] (axis cs:600,0) rectangle (axis cs:629.5,1895);
\draw[draw=none,fill=color0] (axis cs:630,0) rectangle (axis cs:659.5,17);
\draw[draw=none,fill=color0] (axis cs:750,0) rectangle (axis cs:779.5,0.5);
\draw[draw=none,fill=color0] (axis cs:780,0) rectangle (axis cs:809.5,0.5);
\draw[draw=none,fill=color0] (axis cs:660,0) rectangle (axis cs:689.5,9);
\draw[draw=none,fill=color0] (axis cs:690,0) rectangle (axis cs:719.5,9);
\draw[draw=none,fill=color0] (axis cs:720,0) rectangle (axis cs:749.5,2.5);

\nextgroupplot[
scaled y ticks=manual:{}{\pgfmathparse{#1}},
tick align=outside,
tick pos=left,
title={PCTCP},
x grid style={white!69.0196078431373!black},
xlabel={Latency [ms]},
xmajorgrids,
xmin=0, xmax=1000,
xtick style={color=black},
y grid style={white!69.0196078431373!black},
ymajorgrids,
ymin=0, ymax=3000,
ytick style={color=black},
yticklabels={}
]
\draw[draw=none,fill=color0] (axis cs:60,0) rectangle (axis cs:89.5,3);
\draw[draw=none,fill=color0] (axis cs:90,0) rectangle (axis cs:119.5,12.5);
\draw[draw=none,fill=color0] (axis cs:120,0) rectangle (axis cs:149.5,12.5);
\draw[draw=none,fill=color0] (axis cs:150,0) rectangle (axis cs:179.5,12.5);
\draw[draw=none,fill=color0] (axis cs:180,0) rectangle (axis cs:209.5,12.5);
\draw[draw=none,fill=color0] (axis cs:210,0) rectangle (axis cs:239.5,13);
\draw[draw=none,fill=color0] (axis cs:240,0) rectangle (axis cs:269.5,12.5);
\draw[draw=none,fill=color0] (axis cs:270,0) rectangle (axis cs:299.5,13);
\draw[draw=none,fill=color0] (axis cs:300,0) rectangle (axis cs:329.5,40.5);
\draw[draw=none,fill=color0] (axis cs:330,0) rectangle (axis cs:359.5,25);
\draw[draw=none,fill=color0] (axis cs:360,0) rectangle (axis cs:389.5,24);
\draw[draw=none,fill=color0] (axis cs:600,0) rectangle (axis cs:629.5,886);
\draw[draw=none,fill=color0] (axis cs:720,0) rectangle (axis cs:749.5,86);
\draw[draw=none,fill=color0] (axis cs:750,0) rectangle (axis cs:779.5,140);
\draw[draw=none,fill=color0] (axis cs:780,0) rectangle (axis cs:809.5,293.5);
\draw[draw=none,fill=color0] (axis cs:810,0) rectangle (axis cs:839.5,30.5);
\draw[draw=none,fill=color0] (axis cs:840,0) rectangle (axis cs:869.5,67.5);
\draw[draw=none,fill=color0] (axis cs:1050,0) rectangle (axis cs:1079.5,1);
\draw[draw=none,fill=color0] (axis cs:510,0) rectangle (axis cs:539.5,17);
\draw[draw=none,fill=color0] (axis cs:540,0) rectangle (axis cs:569.5,419);
\draw[draw=none,fill=color0] (axis cs:570,0) rectangle (axis cs:599.5,601);
\draw[draw=none,fill=color0] (axis cs:630,0) rectangle (axis cs:659.5,29.5);
\draw[draw=none,fill=color0] (axis cs:900,0) rectangle (axis cs:929.5,229.5);
\draw[draw=none,fill=color0] (axis cs:870,0) rectangle (axis cs:899.5,149.5);
\draw[draw=none,fill=color0] (axis cs:930,0) rectangle (axis cs:959.5,7.5);
\draw[draw=none,fill=color0] (axis cs:960,0) rectangle (axis cs:989.5,7.5);
\draw[draw=none,fill=color0] (axis cs:390,0) rectangle (axis cs:419.5,13);
\draw[draw=none,fill=color0] (axis cs:420,0) rectangle (axis cs:449.5,12.5);
\draw[draw=none,fill=color0] (axis cs:450,0) rectangle (axis cs:479.5,12.5);
\draw[draw=none,fill=color0] (axis cs:480,0) rectangle (axis cs:509.5,12.5);
\draw[draw=none,fill=color0] (axis cs:660,0) rectangle (axis cs:689.5,39.5);
\draw[draw=none,fill=color0] (axis cs:690,0) rectangle (axis cs:719.5,41);
\draw[draw=none,fill=color0] (axis cs:990,0) rectangle (axis cs:1019.5,9);
\draw[draw=none,fill=color0] (axis cs:1020,0) rectangle (axis cs:1049.5,3);
\end{groupplot}

\end{tikzpicture}
    \vspace{-1.5em}
	\caption{Histogram of per-byte latency in the sample topology (scenario~1) for all three circuits.}
	\label{fig:multiple-controllers-latency}
\end{figure}

We further compare PredicTor, Tor, and PCTCP
in Figure~\ref{fig:multiple-controllers-backlog} and~\ref{fig:multiple-controllers-latency},
where we display the backlog and latency.
PredicTor succeeds at its primary goal of sustaining a manageable backlog,
especially compared to vanilla Tor and PCTCP.
The importance of this effective congestion control becomes apparent
in Figure~\ref{fig:multiple-controllers-latency},
where we compare histograms for the latencies of received packets.
With an average latency of 93~ms, PredicTor significantly improves on vanilla Tor (553~ms),
and PCTCP (635~ms).
Based on the underlay link latency, the theoretical minimum was at 80~ms.

\begin{table}
    \footnotesize
	\caption{Comparison of latency and throughput in the sample topology.}
	\begin{tabular}{ccccccccc}
		\toprule
		& \multicolumn{3}{c}{\textbf{Mean latency}} && \multicolumn{4}{c}{\textbf{Throughput}} \\
		& \multicolumn{3}{c}{[ms]} && \multicolumn{4}{c}{[KB/s]} \\
        \cmidrule{2-4}\cmidrule{6-9}
		circuit        & PredicTor & Vanilla & PCTCP &\hspace*{.5cm}& PredicTor & Vanilla & PCTCP & \emph{max-min fair} \\
		\midrule
		1              &      91.7 &   534.7 & 601.6 &            & 134.2     &   102.5 & 130.9 & \emph{136.7} \\
		2              &      98.7 &   545.4 & 697.5 &            & 134.1     &   102.5 & 143.9 & \emph{136.7} \\
		3              &      93.7 &   572.1 & 654.2 &            & 134.1     &   197.3 & 130.5 & \emph{136.7} \\
		\midrule
		\textbf{Total} &      93.5 &   553.1 & 635.5 &            & 402.4     &   402.3 & 405.3 & \emph{410.1} \\
		\bottomrule
	\end{tabular}
	\label{tab:multiple-controllers-comparison}
\end{table}

To summarize our findings from this simple network topology,
we present the achieved latency and data rate values per circuit
in Table~\ref{tab:multiple-controllers-comparison}.
Please note that, as mentioned before, the data rates stem from a slightly different setup (scenario~2),
in which circuit~2 does not stop sending.
Otherwise, the data rates would not be easily comparable.
Regarding throughput, the three methods perform very similarly,
with the difference that PredicTor achieves near perfect fairness~($F=0.98$).
Vanilla Tor clearly discriminates circuits~1 and~2 which share a connection~($F=0.68$).
On the other hand, PCTCP, as expected, manages to revise this effect to some extent
in this simple setup~($F=0.95$).

We conclude that PredicTor bears the potential to provide a clear advantage
with respect to latency and fairness.
However, it should be noted that PredicTor also introduces significant complexity
compared to the previous methods.
On the other hand, the optimization problem~\eqref{eq:mpc_full_optim} is convex,
which guarantees a global solution in polynomial time.
For the given scenario, obtaining a solution takes around 10~ms (laptop-grade CPU).
The problem complexity (number of optimization variables and constraints)
grows linearly with the number of circuits per node.
It is therefore expected that also larger topologies can be tackled with this approach in real-time.
However, scaling the network excessively may well render the approach unusable at some point.
This also becomes apparent when considering the overhead induced by the exchange of feedback messages.

\subsection{Impact of Network Complexity} \label{sec:network-complexity}

In the previous subsections, we have demonstrated the general utility of PredicTor
for congestion control in the Tor network.
While this constitutes an important precondition for applying such approaches,
it is not sufficient for thoroughly assessing its potential.
We therefore now go a step further and investigate the extent
to which the observed benefits and drawbacks also apply to more realistic network situations.
In order to do so, we now focus on more complex networks
that are not trivial to comprehend in every simulated detail.
In the course of this evaluation, we put a special focus on the assumptions and trade-offs made in PredicTor.
Note that this evaluation is still explorative in nature.

We first analyze PredicTor's behavior in networks that are significantly larger
and more complex than in Section~\ref{sec:eval-small-network},
but otherwise do not differ much from the simulation assumptions.
In particular, the underlay links still have a uniform, constant latency.

We construct random networks of different size.
In particular, we fix the number of relays at 50,
but vary the number of circuits, ranging from 10 to 1,000.
Each circuit consists of a random sequence of three relays.
We chose this simple network model for several reasons.
First of all, it allows us to easily realize different levels of congestion in the network.
Since we want to compare PredicTor against existing
congestion control mechanisms for the Tor network,
it is desirable to investigate the influence of network load.
And secondly, we do not want to make too strict assumptions on the precise topology.
Instead, we regard a completely random network as a suitable baseline to compare against.
Much more elaborate models of the Tor network do exist~\cite{jansen2012methodically,neverenough-sec2021}.
In fact, our methodology is still influenced by~\cite{neverenough-sec2021},
\eg in that we pay attention to generating a completely new random network
for every single simulation run to avoid statistical bias.

Our focus is mainly on bulk traffic, that is, each circuit transfers and infinite stream of data.
After a lead time, we evaluate the steady state (identified manually by ourselves),
\ie, the last two seconds of simulation time.
For this time span, we again analyze the following metrics:
byte-wise end-to-end latency, throughput, and fairness.
For each data point, we carry out the simulation 25~times (with different random seeds)
and report mean values.
In Section~\ref{sec:eval-traffic-patterns},
we lift the assumption of having only bulk traffic
and consider a mix of bulk and interactive web traffic instead.

\definecolor{color0}{rgb}{0.12156862745098,0.466666666666667,0.705882352941177}
\definecolor{color1}{rgb}{1,0.498039215686275,0.0549019607843137}
\definecolor{color2}{rgb}{0.172549019607843,0.627450980392157,0.172549019607843}

\begin{figure}
\pgfplotsset{width=.45\textwidth,height=.3\textwidth}
\pgfplotsset{
tick label style={font=\scriptsize\sffamily},
label style={font=\scriptsize\sffamily},
legend style={font=\footnotesize},
}
\centering{%
\begin{tikzpicture}[font=\scriptsize]
\node (v) {PredicTor};
\draw[color=color2,very thick] ($(v.east) + (0.15cm,0cm)$) -- +(0.7cm,0cm);

\node[right=1.75cm of v] (s) {Vanilla Tor};
\draw[color=color1,very thick] ($(s.east) + (0.15cm,0cm)$) -- +(0.7cm,0cm);

\node[right=1.75cm of s] (n) {PCTCP};
\draw[color=color0,very thick] ($(n.east) + (0.15cm,0cm)$) -- +(0.7cm,0cm);

\end{tikzpicture} \par \vspace{-1em}
\subfloat[\textbf{Latency}\label{fig:eval-latency}]{%
\begin{tikzpicture}

\definecolor{color0}{rgb}{0.12156862745098,0.466666666666667,0.705882352941177}
\definecolor{color1}{rgb}{1,0.498039215686275,0.0549019607843137}
\definecolor{color2}{rgb}{0.172549019607843,0.627450980392157,0.172549019607843}

\begin{axis}[
tick align=outside,
tick pos=left,
x grid style={white!69.0196078431373!black},
xlabel={Number of circuits},
xmajorgrids,
xmin=0, xmax=1000,
xtick style={color=black},
y grid style={white!69.0196078431373!black},
ylabel={Average latency [s]},
ymajorgrids,
ymin=0, ymax=4.6,
ytick style={color=black}
]
\addplot [very thick, color0]
table {%
15 0.299044722502218
50 0.535693978269878
100 0.941221172151202
150 1.24351065063589
200 1.58578981828109
250 1.85664758655301
300 2.17954898187109
350 2.46389566611187
400 2.63661709528599
450 2.82708599727215
500 2.95366329712657
550 3.15466759127041
600 3.25593299759534
650 3.3434594033175
700 3.45617984804018
750 3.55660813218113
800 3.59199950455213
850 3.62317508593427
900 3.66988179823383
950 3.71063682653802
1000 3.75732001222721
};
\addplot [very thick, color1]
table {%
15 0.299044722502218
50 0.535693978269878
100 0.982220441543117
150 1.27155222003577
200 1.55508911205279
250 1.79629860360977
300 2.15877627163223
350 2.39615352566701
400 2.63777226902002
450 2.87973619614537
500 2.97715839834503
550 3.24514720085837
600 3.41827555061281
650 3.59069395208057
700 3.74845927407568
750 3.95617326098142
800 4.05533961442037
850 4.14905953660745
900 4.19785029787908
950 4.32167517364613
1000 4.39874730407448
};
\addplot [very thick, color2]
table {%
15 0.258259085184475
50 0.347632740683658
100 0.42506508924871
150 0.400645053403525
200 0.380361153350687
250 0.3608423752446
300 0.350063836692996
350 0.332359109330084
400 0.323442112327468
450 0.31001137107408
500 0.329251905567171
550 0.352007928076096
600 0.363998178025215
650 0.362622533277231
700 0.339866661568043
750 0.33514788588009
800 0.3357238903141
850 0.323216417143893
900 0.312335589132088
950 0.309499227651157
1000 0.29658657657354
};
\end{axis}

\end{tikzpicture}}\quad
\subfloat[\textbf{Throughput}\label{fig:eval-throughput}]{%
\begin{tikzpicture}

\definecolor{color0}{rgb}{0.12156862745098,0.466666666666667,0.705882352941177}
\definecolor{color1}{rgb}{1,0.498039215686275,0.0549019607843137}
\definecolor{color2}{rgb}{0.172549019607843,0.627450980392157,0.172549019607843}

\begin{axis}[
tick align=outside,
tick pos=left,
x grid style={white!69.0196078431373!black},
xlabel={Number of circuits},
xmajorgrids,
xmin=0, xmax=1000,
xtick style={color=black},
y grid style={white!69.0196078431373!black},
ylabel={Average throughput [MB/s]},
ymajorgrids,
ymin=0, ymax=25.7542168292999,
ytick style={color=black}
]
\addplot [very thick, color0]
table {%
15 7.41959095001221
50 15.182318687439
100 18.9029188156128
150 21.2229509353638
200 21.5432910919189
250 22.2264776229858
300 22.0205955505371
350 22.8051795959473
400 22.8248891830444
450 23.673113822937
500 23.3841190338135
550 23.0492935180664
600 23.7469654083252
650 22.9545450210571
700 22.6465530395508
750 23.7170448303223
800 23.9105787277222
850 23.2741727828979
900 24.6450576782227
950 24.8566389083862
1000 24.8279056549072
};
\addplot [very thick, color1]
table {%
15 7.41959095001221
50 15.182318687439
100 18.9755830764771
150 20.6231145858765
200 21.175220489502
250 22.5738887786865
300 22.0980091094971
350 22.5225963592529
400 22.7365522384644
450 22.2343139648438
500 23.3292646408081
550 22.5487174987793
600 22.1863460540771
650 22.0854234695435
700 21.8636312484741
750 21.9403324127197
800 21.3212614059448
850 22.0811491012573
900 21.3982000350952
950 21.4402313232422
1000 20.9807367324829
};
\addplot [very thick, color2]
table {%
15 6.9050804901123
50 13.2745921325684
100 16.3022981643677
150 17.5701517868042
200 17.6289765930176
250 18.2852820968628
300 18.4526948547363
350 18.4096472167969
400 18.5739919281006
450 18.6768047332764
500 18.7912533187866
550 18.5337558746338
600 18.5250171661377
650 18.4106255722046
700 18.576594543457
750 18.6619584274292
800 18.79638256073
850 18.7775373458862
900 18.8926698303223
950 18.8654373550415
1000 18.8790678405762
};
\end{axis}

\end{tikzpicture}%
}
\caption{Performance in random networks with 50~relays and variable number of circuits.
    (\ref{fig:eval-latency})~Average byte-wise latency, and
    (\ref{fig:eval-throughput})~average overall throughput (per-run sum of all circuits).} \label{fig:eval-complexity-latency-throughput}
}
\end{figure}
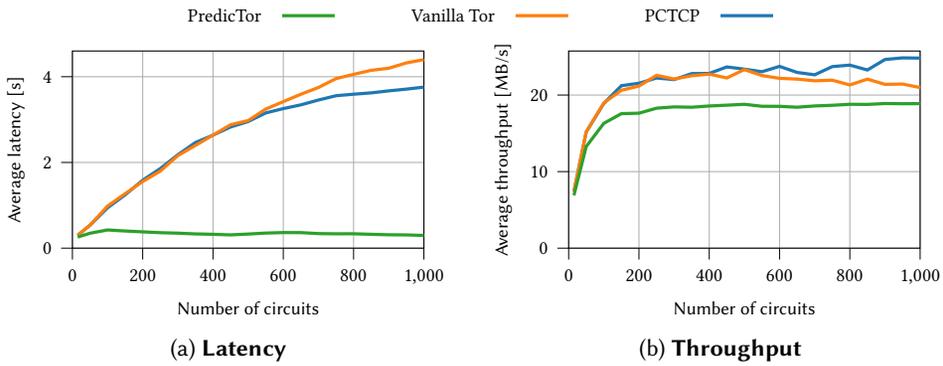

\paragraph{Latency and Throughput}
We first focus on the latency and throughput that is achieved
by each of the three algorithms with growing congestion in the network.
Figure~\ref{fig:eval-complexity-latency-throughput} presents our results.
With respect to latency, we can see that PredicTor offers great potential
to heavily improve on the status quo (see Figure~\ref{fig:eval-latency}).
In particular, by explicitly requiring small queues during optimization,
PredicTor achieves low latency independently of the number of circuits.
In contrast, latency grows indefinitely for denser networks in the case of vanilla Tor and PCTCP.
This is because PredicTor does not \enquote{blindly} send data into the network,
but only if the controller's optimization result allows to do so,
based on local measurements and feedback from adjacent relays.
The resulting lower backlog leads to much lower latencies, even for heavily crowded networks.
In contrast, vanilla Tor and PCTCP have to rely solely
on the state of their local TCP connections
that cannot take into account the state in the network more than one hop down the circuit.
Therefore, they send too much data, leading to significant backlog and latency.
The only way that vanilla Tor and PCTCP could react to congestion on the overall circuit
would be Tor's end-to-end \texttt{SENDME} window mechanism.
However, this window has previously been identified to be too coarse-grained and rigid
to help with efficient congestion control~\cite{DBLP:conf/pet/AlSabahBGGMSV11}.
In contrast to vanilla Tor, PCTCP can deal with extreme congestion slightly better
due to its avoidance of head-of-line blocking in the case of packet loss
that becomes more relevant in these scenarios.

When looking at the achieved throughput, however,
PredicTor cannot fully compete with the traditional approaches.
Over the whole parameter range, it achieves considerably lower overall data rates,
averaging at a disadvantage of around 20\%.
While this is an insight
that was not apparent from the toy scenarios we examined in the previous section,
we attribute it to two root causes:
Firstly, the controller behaves conservatively as far as data rate assignment is concerned.
A circuit is only given a share of the available bandwidth
\emph{after} this was decided to be beneficial in the sense of the MPC optimization problem.
This lack of aggressiveness differentiates PredicTor from the other approaches.
And secondly, just as vanilla Tor and PCTCP,
PredicTor currently uses TCP as the underlying transport protocol
and acts as an additional scheduler on top of it.
In the general case, it will therefore not be able to outperform
approaches that only use TCP without an additional sending limit.
Either way, we can state that the lower average throughput
constitutes a clear trade-off that PredicTor makes in favor of lower latency.

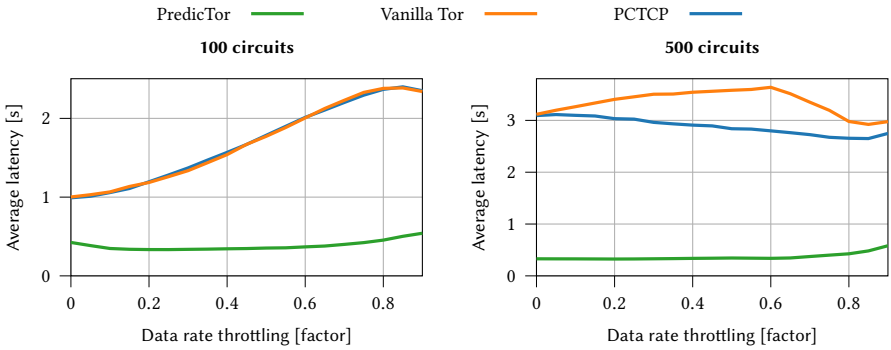
\begin{figure}
    \pgfplotsset{width=.45\textwidth,height=.3\textwidth}
    \pgfplotsset{
    tick label style={font=\scriptsize\sffamily},
    label style={font=\scriptsize\sffamily},
    title style={font=\scriptsize\bfseries\sffamily},
    legend style={font=\footnotesize},
    group/horizontal sep=0.5cm,
    }
    \centering{%
    \begin{tikzpicture}[font=\scriptsize]
    \node (v) {PredicTor};
    \draw[color=color2,very thick] ($(v.east) + (0.15cm,0cm)$) -- +(0.7cm,0cm);

    \node[right=1.75cm of v] (s) {Vanilla Tor};
    \draw[color=color1,very thick] ($(s.east) + (0.15cm,0cm)$) -- +(0.7cm,0cm);

    \node[right=1.75cm of s] (n) {PCTCP};
    \draw[color=color0,very thick] ($(n.east) + (0.15cm,0cm)$) -- +(0.7cm,0cm);
    \end{tikzpicture} \par %
    \begin{tikzpicture}

\definecolor{color0}{rgb}{0.12156862745098,0.466666666666667,0.705882352941177}
\definecolor{color1}{rgb}{1,0.498039215686275,0.0549019607843137}
\definecolor{color2}{rgb}{0.172549019607843,0.627450980392157,0.172549019607843}

\begin{axis}[
tick align=outside,
tick pos=left,
title={100 circuits},
x grid style={white!69.0196078431373!black},
xlabel={Data rate throttling [factor]},
xmajorgrids,
xmin=0, xmax=0.9,
xtick style={color=black},
y grid style={white!69.0196078431373!black},
ylabel={Average latency [s]},
ymajorgrids,
ymin=0, ymax=2.50389771223858,
ytick style={color=black}
]
\addplot [very thick, color0]
table {%
0 0.991302379068386
0.05 1.01105246878164
0.1 1.05770590107814
0.15 1.10967335717214
0.2 1.19508946300661
0.25 1.28140126711834
0.3 1.37065800344118
0.35 1.47080814095071
0.4 1.56820972962224
0.45 1.66972425578652
0.5 1.78506196750569
0.55 1.89900134491253
0.6 2.01296969368338
0.65 2.10570179225535
0.7 2.20198045722565
0.75 2.29511639745386
0.8 2.36826965929847
0.85 2.40055048600099
0.9 2.35139446873554
};
\addplot [very thick, color1]
table {%
0 1.00189893376415
0.05 1.03064572636852
0.1 1.06624297811603
0.15 1.13542103126926
0.2 1.1860384861455
0.25 1.25975511657852
0.3 1.33593980883245
0.35 1.43716418948471
0.4 1.54007320828326
0.45 1.66998911413967
0.5 1.77176727957758
0.55 1.88382156179751
0.6 2.00554386175936
0.65 2.12558734705518
0.7 2.22971738052533
0.75 2.33009172776714
0.8 2.38111897428124
0.85 2.38654168988207
0.9 2.34106552177471
};
\addplot [very thick, color2]
table {%
0 0.424390060587576
0.05 0.38459581141535
0.1 0.347653170503562
0.15 0.337451664352938
0.2 0.333938767293494
0.25 0.333605961249115
0.3 0.336741371816748
0.35 0.339481230871721
0.4 0.343700601676486
0.45 0.346869829087087
0.5 0.353210396977736
0.55 0.356192989908162
0.6 0.36819962929647
0.65 0.378725488439735
0.7 0.399635234809374
0.75 0.422365541063582
0.8 0.454056395325887
0.85 0.502305981426451
0.9 0.54115908870875
};
\end{axis}

\end{tikzpicture} \quad
    \begin{tikzpicture}

\definecolor{color0}{rgb}{0.12156862745098,0.466666666666667,0.705882352941177}
\definecolor{color1}{rgb}{1,0.498039215686275,0.0549019607843137}
\definecolor{color2}{rgb}{0.172549019607843,0.627450980392157,0.172549019607843}

\begin{axis}[
tick align=outside,
tick pos=left,
title={500 circuits},
x grid style={white!69.0196078431373!black},
xlabel={Data rate throttling [factor]},
xmajorgrids,
xmin=0, xmax=0.9,
xtick style={color=black},
y grid style={white!69.0196078431373!black},
ylabel={Average latency [s]},
ymajorgrids,
ymin=0, ymax=3.8039669048205,
ytick style={color=black}
]
\addplot [very thick, color0]
table {%
0 3.09300544657357
0.05 3.11229337114211
0.1 3.09724436453987
0.15 3.08475707397585
0.2 3.03322575373842
0.25 3.02415905579444
0.3 2.96169053235711
0.35 2.93325745614155
0.4 2.90866770550707
0.45 2.89367826361721
0.5 2.83899067352719
0.55 2.83216888844395
0.6 2.79791028919564
0.65 2.7627932718751
0.7 2.72504184533715
0.75 2.67357846810085
0.8 2.65490003692973
0.85 2.64910469781964
0.9 2.74831458862766
};
\addplot [very thick, color1]
table {%
0 3.11663521554462
0.05 3.1958813048803
0.1 3.26451767003247
0.15 3.33540580677194
0.2 3.40578960846582
0.25 3.45503620648927
0.3 3.50466326223805
0.35 3.50821929827107
0.4 3.5431088444158
0.45 3.56166282718828
0.5 3.58094808934818
0.55 3.59663918323346
0.6 3.63828572964942
0.65 3.51453028737357
0.7 3.35103051944326
0.75 3.19375314563264
0.8 2.97906384938851
0.85 2.92232534261645
0.9 2.97599669721174
};
\addplot [very thick, color2]
table {%
0 0.329202326372134
0.05 0.328134558387465
0.1 0.327693795575727
0.15 0.326309203726063
0.2 0.324662226227857
0.25 0.32621526560061
0.3 0.329511298427422
0.35 0.33299727914894
0.4 0.337430260145817
0.45 0.340541286324262
0.5 0.344518864995326
0.55 0.341350744370663
0.6 0.338149120046557
0.65 0.345394145559009
0.7 0.373053308750044
0.75 0.399923786371862
0.8 0.425679037060434
0.85 0.481400565771782
0.9 0.581246978232479
};
\end{axis}

\end{tikzpicture}
    \caption{Impact of application-layer throttling on latency in networks with 50 relays.
             Note that reducing the data rate does not enable vanilla Tor and PCTCP
             to achieve latency values as low as PredicTor.}
    \label{fig:eval-complexity-throttled}
}
\end{figure}

Putting this relationship into perspective, one might argue
that the lower latency is not an achievement of PredicTor itself,
but simply an artifact and a consequence of the lower throughput,
because the lower data rates result in smaller queues.
However, this is not the case as another experiment shows:
Out of the previously explored parameter space,
we chose two scenarios that are representative
for a low and high degree of congestion in the network,
respectively (100 and 500 circuits, with 50 relays).
For both of these scenarios, we introduced an artificial,
\emph{application-layer} throttling mechanism,
reducing the amount of available bandwidth each relay can use.
We varied this throttling factor between 0.0 (no throttling at all)
and 0.9 (only 10\% of bandwidth remain) and recorded the achieved latency.
The assumption was that artificially lowering the data rates of vanilla Tor and PCTCP
would already be enough to achieve also lower latency even for these mechanisms.
Figure~\ref{fig:eval-complexity-throttled} reveals that the opposite is true,
for both the heavily and less congested networks.
In fact, lowering the data rates mostly even leads to
an \emph{increase} in latency with vanilla Tor and PCTCP.
To understand this behavior, we have to emphasize that each data transfer through the Tor network
does not consist of only one single TCP connection, but denotes a multi-hop data transfer.
Lowering the data rates therefore does not automatically
lead to a substantial reduction of backlog.
Instead, the packets are queued at the application layer
and experience the same throttling when being forwarded.
We can thus conclude that the low latency is in fact an achievement of PredicTor
and not only a side effect of the lower data rates.

\begin{figure}
\pgfplotsset{width=.45\textwidth,height=.3\textwidth}
\pgfplotsset{
tick label style={font=\scriptsize\sffamily},
label style={font=\scriptsize\sffamily},
legend style={font=\footnotesize},
}
\centering{%
\begin{tikzpicture}[font=\scriptsize]
\node (v) {PredicTor};
\draw[color=color2,very thick] ($(v.east) + (0.15cm,0cm)$) -- +(0.7cm,0cm);

\node[right=1.75cm of v] (s) {Vanilla Tor};
\draw[color=color1,very thick] ($(s.east) + (0.15cm,0cm)$) -- +(0.7cm,0cm);

\node[right=1.75cm of s] (n) {PCTCP};
\draw[color=color0,very thick] ($(n.east) + (0.15cm,0cm)$) -- +(0.7cm,0cm);

\node[right=1.75cm of n] (f) {max-min fairness};
\draw[color=black,very thick] ($(f.east) + (0.15cm,0cm)$) -- +(0.7cm,0cm);
\end{tikzpicture} \par \vspace{-1em}
\subfloat[\textbf{Data rate distribution} (750 circuits)\label{fig:eval-fairness-cdf}]{%
\input{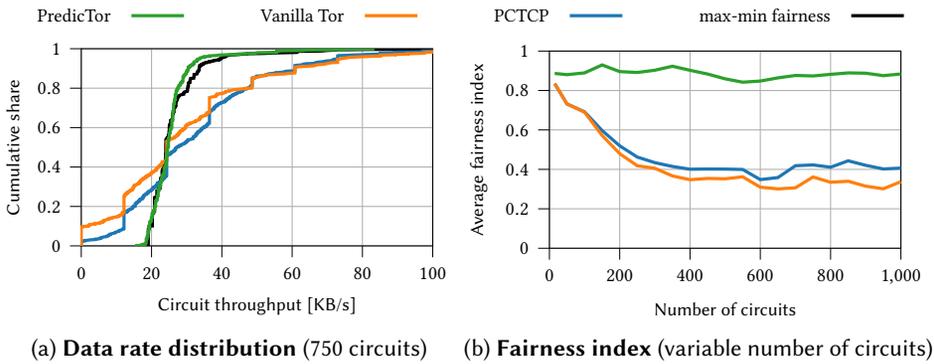}}
\subfloat[\textbf{Fairness index} (variable number of circuits)\label{fig:eval-fairness-index}]{%
\begin{tikzpicture}

\definecolor{color0}{rgb}{0.12156862745098,0.466666666666667,0.705882352941177}
\definecolor{color1}{rgb}{1,0.498039215686275,0.0549019607843137}
\definecolor{color2}{rgb}{0.172549019607843,0.627450980392157,0.172549019607843}

\begin{axis}[
tick align=outside,
tick pos=left,
x grid style={white!69.0196078431373!black},
xlabel={Number of circuits},
xmajorgrids,
xmin=0, xmax=1000,
xtick style={color=black},
y grid style={white!69.0196078431373!black},
ylabel={Average fairness index},
ymajorgrids,
ymin=0, ymax=1,
ytick style={color=black}
]
\addplot [very thick, color0]
table {%
15 0.83490145882353
50 0.731778590718854
100 0.691761347837625
150 0.597023937049325
200 0.519476363135345
250 0.462362013232698
300 0.434069854907524
350 0.414737717112098
400 0.4008008349661
450 0.401801632473721
500 0.401427849358274
550 0.399491289483645
600 0.348130403611184
650 0.358257650697643
700 0.418521216644467
750 0.422443008669109
800 0.410868355767722
850 0.443432731307671
900 0.420554685008957
950 0.402098429358723
1000 0.406426618839411
};
\addplot [very thick, color1]
table {%
15 0.83490145882353
50 0.731778590718854
100 0.689801638213358
150 0.571790118396175
200 0.47973885187055
250 0.418318323618682
300 0.405593226184272
350 0.366627585701676
400 0.347628673732801
450 0.353882890238343
500 0.352221361665069
550 0.362203726303738
600 0.309574672342731
650 0.300933734264499
700 0.306580451782325
750 0.361282346446575
800 0.334654686154452
850 0.339730994421345
900 0.314725912800098
950 0.301478522183912
1000 0.337576772001455
};
\addplot [very thick, color2]
table {%
15 0.886644685803921
50 0.880333865886728
100 0.889410174655477
150 0.929668102844158
200 0.895969096698278
250 0.891416884229983
300 0.902884286805391
350 0.923069725055597
400 0.902551561885053
450 0.88361677148268
500 0.859259447947163
550 0.842186709148405
600 0.847965188262477
650 0.864067297025777
700 0.876335739632024
750 0.873837540008447
800 0.881937998506905
850 0.889360124313356
900 0.887777704134275
950 0.875390480477718
1000 0.883844431044974
};
\end{axis}

\end{tikzpicture}}\quad
\caption{Throughput fairness in random networks with 50~relays.
    (\ref{fig:eval-fairness-cdf})~Data rate CDF plot of an example run (750~circuits), and
    (\ref{fig:eval-fairness-index})~fairness index over varying number of circuits.} \label{fig:eval-complexity-fairness}
}
\end{figure}

\paragraph{Fairness}
Another central promise of PredicTor is the achievement of much better fairness,
based on the notion of max-min fairness.
We now evaluate the degree to which PredicTor can realize fairness also in complex networks.
Our results are based on the same simulation runs as for the latency and throughput evaluation.

In Figure~\ref{fig:eval-fairness-cdf},
we show an individual simulation run with 750~concurrent circuits
as a CDF plot of the data rates to visually inspect the fairness.
We also included the max-min fair rate distribution as a baseline.
As can be seen, vanilla Tor and PCTCP give most circuits either too low or too high data rates.
In contrast, PredicTor very closely approximates max-min fairness.
The only deviation that can visually be identified
is that several circuits use less bandwidth than optimal max-min would allow them to.
This is in line with our previous observation
that the traffic generated by PredicTor is rather conservative
and the overall data rate tends to be lower than with the traditional approaches.

We validated and generalized the insight gained from the single simulation runs
by calculating the fairness index~$F$ for varying circuit numbers.
The plot in Figure~\ref{fig:eval-fairness-index} reveals that PredicTor is highly effective at ensuring fairness,
even in situations in which the network is heavily congested.
The explicit max-min fairness formulation in PredicTor's optimization goal
consistently causes around 90\% of the network traffic to adhere to max-min fairness.
In contrast, vanilla Tor and PCTCP generally generate much less fair traffic.
This becomes especially apparent the more congested the network is.
Again, PCTCP performs slightly better than vanilla Tor,
but still cannot clearly surpass the threshold of around $F=0.4$
if there is considerable congestion in the network.
We can also see that, if there is only a little congestion,
both of the traditional approaches generate fairer traffic.
This, however, is not because they can \emph{ensure} this behavior in any kind.
Instead, the lack of congestion also implies that
more circuits can fully utilize the available bandwidth on their path.
Therefore, a larger share of circuits can be regarded as behaving in a fair way.

\subsection{Impact of Model Assumptions} \label{sec:eval-model-assumptions}

As shown in the previous subsection, PredicTor is able to improve on both, latency and fairness,
even in complex, \enquote{crowded} networks.
However, even if the concurrent data transmissions of many circuits in these simulations
added a considerable degree of randomness to the network behavior,
the scenario is still specific to the assumptions made in PredicTor's system model.
Most importantly, the underlay link latencies exactly match the values
that are used for calculation in the model.
In reality, this would not be the case
as many influences outside our model affect the connection.
Such factors might include cross traffic on the Internet, routing topology changes and others.

The system model describing the expected network behavior is crucial
to the functioning of model predictive control approaches like PredicTor.
It is therefore important to investigate to which degree the overall system
is susceptible to deviations from the model.
In this section, we thus focus on the robustness of PredicTor
in the face of network behavior differing from PredicTor's expectations drawn from its system model.
We note that the strongest assumption made by PredicTor is
that the latency of the underlay links can reliably be known in advance.
Therefore, we now evaluate PredicTor's behavior if this assumption is violated.

For this, we employ a similar setup as in Section~\ref{sec:network-complexity}.
However, we now do not simulate uniform link latency.
In contrast, we introduce a fuzziness factor $f$.
The fuzziness $f$ defines the uncertainty in link latency as follows:
If PredicTor's system model expects a link latency of $l$,
the link latency is instead chosen uniformly at random from the interval
$[\max(l \cdot (1-f), \epsilon ) ;  l \cdot (1+f)]$.
As a consequence of this construction, the larger the fuzziness value $f$,
the larger will be the average deviation of the underlay link latencies from PredicTor's system model.
This gives us a concise parameter to evaluate the robustness of PredicTor against a system model mismatch.
As a technical detail, we introduce a lower bound
of some arbitrarily small value $\epsilon > 0$ for the latencies
to avoid negative and zero-valued latencies.
For fuzziness values~$f > 1$, this by design shifts the distribution towards higher latencies.

We now fix the number of relays and circuits
to evaluate PredicTor's robustness by varying the link fuzziness.
Since the analysis without link latency deviation has revealed
that the performance differs depending on how crowded the network is,
we carry out the following evaluation twice:
Firstly, with a circuit number of~100, representing a network situation with little congestion,
and secondly, with 500~circuits, which induces much more congestion in the network.
Again, each trial is repeated 25~times with newly generated network topologies.

\begin{figure}
    \pgfplotsset{width=.45\textwidth,height=.3\textwidth}
    \pgfplotsset{
    tick label style={font=\scriptsize\sffamily},
    label style={font=\scriptsize\sffamily},
    title style={font=\scriptsize\bfseries\sffamily},
    legend style={font=\footnotesize},
    group/horizontal sep=0.5cm,
    }
    \centering{%
    \begin{tikzpicture}[font=\scriptsize]
    \node (v) {PredicTor};
    \draw[color=color2,very thick] ($(v.east) + (0.15cm,0cm)$) -- +(0.7cm,0cm);

    \node[right=1.75cm of v] (s) {Vanilla Tor};
    \draw[color=color1,very thick] ($(s.east) + (0.15cm,0cm)$) -- +(0.7cm,0cm);

    \node[right=1.75cm of s] (n) {PCTCP};
    \draw[color=color0,very thick] ($(n.east) + (0.15cm,0cm)$) -- +(0.7cm,0cm);
    \end{tikzpicture} \par %
    \begin{tikzpicture}

\definecolor{color0}{rgb}{0.12156862745098,0.466666666666667,0.705882352941177}
\definecolor{color1}{rgb}{1,0.498039215686275,0.0549019607843137}
\definecolor{color2}{rgb}{0.172549019607843,0.627450980392157,0.172549019607843}

\begin{axis}[
tick align=outside,
tick pos=left,
title={100 circuits},
x grid style={white!69.0196078431373!black},
xlabel={Link fuzziness [factor]},
xmajorgrids,
xmin=0, xmax=6,
xtick style={color=black},
y grid style={white!69.0196078431373!black},
ylabel={Average latency [s]},
ymajorgrids,
ymin=0, ymax=1.26169345070702,
ytick style={color=black}
]
\addplot [very thick, color0]
table {%
0 0.941221172151202
0.1 0.949948899165171
0.15 0.93502794137423
0.2 0.945804348688445
0.25 0.943550887825745
0.3 0.94092886361276
0.35 0.926360264723121
0.4 0.946931038732715
0.45 0.952529475818616
0.5 0.964347469305575
0.55 0.960980715706643
0.6 0.970757441455538
0.65 1.0034589734771
0.7 0.982619126464723
0.75 0.97894201194261
0.8 0.968741971525059
0.85 0.984611035943479
0.9 0.989911726959788
0.95 0.983532965196267
1 0.995694560058797
1.2 0.995140972139793
1.4 1.0216878214149
1.6 0.997449426322778
1.8 0.988563601212504
2 0.99859717194866
2.2 1.01171301424486
2.4 1.005813848684
2.6 1.01503054113501
2.8 1.01385835787565
3 1.01368487869378
3.2 1.03007604974783
3.4 1.03537359034943
3.6 1.04829824801613
3.8 1.07109320338145
4 1.07731250633384
4.2 1.09017648761391
4.4 1.11269725927347
4.6 1.11505811677474
4.8 1.12919389728146
5 1.14318624181379
5.2 1.14535802417822
5.4 1.16328196129744
5.6 1.19435595749255
5.8 1.21930908046278
};
\addplot [very thick, color1]
table {%
0 0.982220441543117
0.1 0.96885834867257
0.15 0.958733967965397
0.2 0.931115923981456
0.25 0.949836253825008
0.3 0.973653437305091
0.35 0.948356013016278
0.4 0.926839800841138
0.45 0.953573310632553
0.5 0.950047229248035
0.55 0.953357413919855
0.6 0.960486218470172
0.65 0.977312097220551
0.7 0.97291337209942
0.75 0.993967548277341
0.8 1.0007027608346
0.85 0.998140770349952
0.9 0.993789776838654
0.95 0.981670837444948
1 0.987588221556698
1.2 1.0037082754805
1.4 1.00684568856503
1.6 0.979585993295916
1.8 1.00015316294616
2 0.996000541355239
2.2 0.995649977776811
2.4 0.999702821510038
2.6 1.02257014975284
2.8 1.04468247686917
3 1.02908830727746
3.2 1.04020309512784
3.4 1.05067065434003
3.6 1.05142369430995
3.8 1.05985937312881
4 1.08434671646171
4.2 1.10736191875163
4.4 1.09397542314368
4.6 1.10554254798842
4.8 1.10015352518972
5 1.14108898863521
5.2 1.15207151969158
5.4 1.18593320969864
5.6 1.20022128681026
5.8 1.22146449895778
};
\addplot [very thick, color2]
table {%
0 0.424683630058685
0.1 0.421187032928028
0.15 0.417937987145192
0.2 0.419427942003404
0.25 0.426587803799339
0.3 0.425165798941256
0.35 0.420180570138234
0.4 0.416885463972981
0.45 0.425834063614161
0.5 0.422917343277346
0.55 0.422354666408648
0.6 0.422485982187684
0.65 0.4242416958592
0.7 0.425231861944133
0.75 0.428165034982148
0.8 0.427681258853599
0.85 0.425359897400683
0.9 0.426469789562327
0.95 0.429808943467358
1 0.430516039425354
1.2 0.44992386616332
1.4 0.47309913888652
1.6 0.485514891349939
1.8 0.507105712595994
2 0.525414487777711
2.2 0.555993353903228
2.4 0.581336525679101
2.6 0.595142920736617
2.8 0.61743822759783
3 0.635712318598434
3.2 0.65821770648635
3.4 0.676827755874431
3.6 0.703707315108338
3.8 0.718136510791896
4 0.738936257729378
4.2 0.767513806736558
4.4 0.783580523658113
4.6 0.803759076780297
4.8 0.823748277641991
5 0.842923491027132
5.2 0.86821278165199
5.4 0.895248555549828
5.6 0.923535364581773
5.8 0.945741711495104
};
\end{axis}

\end{tikzpicture} \quad
    \begin{tikzpicture}

\definecolor{color0}{rgb}{0.12156862745098,0.466666666666667,0.705882352941177}
\definecolor{color1}{rgb}{1,0.498039215686275,0.0549019607843137}
\definecolor{color2}{rgb}{0.172549019607843,0.627450980392157,0.172549019607843}

\begin{axis}[
tick align=outside,
tick pos=left,
title={500 circuits},
x grid style={white!69.0196078431373!black},
xlabel={Link fuzziness [factor]},
xmajorgrids,
xmin=0, xmax=6,
xtick style={color=black},
y grid style={white!69.0196078431373!black},
ylabel={Average latency [s]},
ymajorgrids,
ymin=0, ymax=3.60264757958682,
ytick style={color=black}
]
\addplot [very thick, color0]
table {%
0 2.95366329712657
0.2 2.99543778475366
0.6 2.96939349627964
0.8 3.01196385913161
1 3.02706775647383
1.2 3.02898663647053
1.4 3.04893001923824
1.6 3.02208383692383
1.8 3.06700811263096
2 3.06449497091536
2.2 3.12895590955488
2.4 3.07345836729308
2.6 3.0802136748667
2.8 3.09364905901315
3 3.09384812487945
3.2 3.13347470284029
3.4 3.14945842379076
3.6 3.17523631808805
3.8 3.16257292061273
4 3.18578332680062
4.2 3.2580826778629
4.4 3.24055103971709
4.6 3.26670444191012
4.8 3.28934235693927
5 3.3222155442156
5.2 3.30283943598297
5.4 3.31251132211185
5.6 3.36089854576576
5.8 3.36344173400168
};
\addplot [very thick, color1]
table {%
0 2.97715839834503
0.2 3.05076503012835
0.6 3.02967437687846
0.8 3.07900171321467
1 3.1168383869426
1.2 3.08683949359016
1.4 3.09792590998938
1.6 3.13101856729774
1.8 3.12933695003689
2 3.16254576080781
2.2 3.14882875193411
2.4 3.17158159451499
2.6 3.1277642139505
2.8 3.18510625027575
3 3.19695030265863
3.2 3.17596421791075
3.4 3.16661998092645
3.6 3.17629221818482
3.8 3.21698423055081
4 3.23177429756764
4.2 3.26479861475393
4.4 3.29766206975441
4.6 3.30142383720407
4.8 3.3324941301399
5 3.34250488240022
5.2 3.3956857381233
5.4 3.41090675785165
5.6 3.40484190911056
5.8 3.44677051908223
};
\addplot [very thick, color2]
table {%
0 0.329229308990534
0.2 0.338388810818506
0.6 0.35275336342852
0.8 0.36134267950478
1 0.371840438164674
1.2 0.391352311722407
1.4 0.411651227219439
1.6 0.439325595399424
1.8 0.465656882066578
2 0.498002209652821
2.2 0.522147252338517
2.4 0.549117319660439
2.6 0.588575316201962
2.8 0.624496127406018
3 0.663258415919269
3.2 0.705947173037105
3.4 0.755660050301644
3.6 0.801278322284116
3.8 0.847995563181889
4 0.892640995632441
4.2 0.938579520478866
4.4 0.995195467369284
4.6 1.03210842251424
4.8 1.08795027701959
5 1.13944000069892
5.2 1.19238142755918
5.4 1.24937689601181
5.6 1.31178040135225
5.8 1.3619846986989
};
\end{axis}

\end{tikzpicture}
    \caption{Impact of link latency deviation from the system model on achieved latency, in random networks with 50~relays.}
    \label{fig:eval-fuzziness-latency}
}
\end{figure}
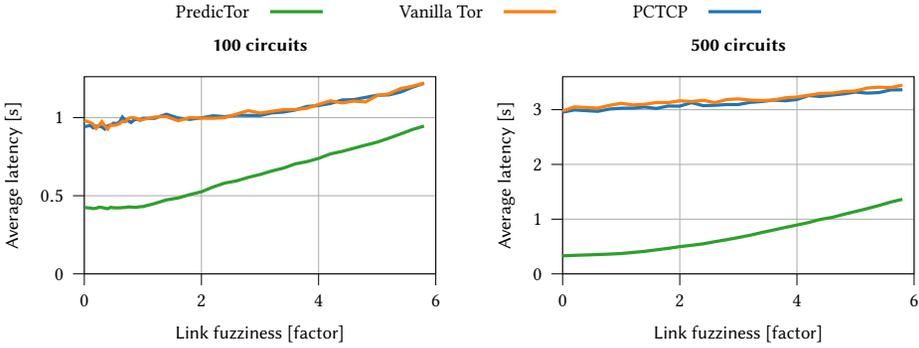
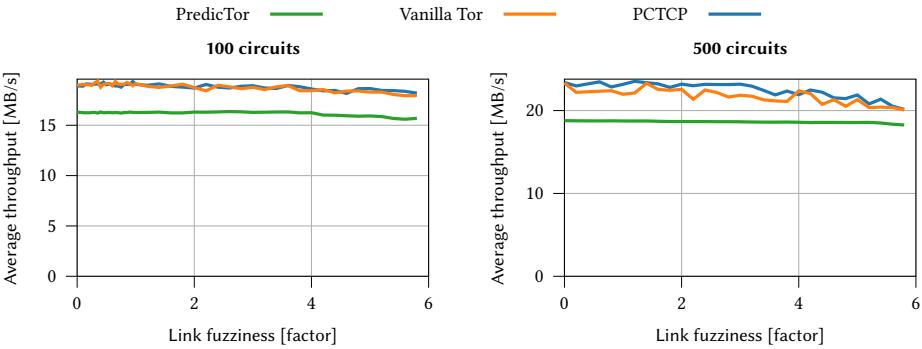
\begin{figure}
    \pgfplotsset{width=.45\textwidth,height=.3\textwidth}
    \pgfplotsset{
    tick label style={font=\scriptsize\sffamily},
    label style={font=\scriptsize\sffamily},
    title style={font=\scriptsize\bfseries\sffamily},
    legend style={font=\footnotesize},
    group/horizontal sep=0.5cm,
    }
    \centering{%
    \begin{tikzpicture}[font=\scriptsize]
    \node (v) {PredicTor};
    \draw[color=color2,very thick] ($(v.east) + (0.15cm,0cm)$) -- +(0.7cm,0cm);

    \node[right=1.75cm of v] (s) {Vanilla Tor};
    \draw[color=color1,very thick] ($(s.east) + (0.15cm,0cm)$) -- +(0.7cm,0cm);

    \node[right=1.75cm of s] (n) {PCTCP};
    \draw[color=color0,very thick] ($(n.east) + (0.15cm,0cm)$) -- +(0.7cm,0cm);
    \end{tikzpicture} \par %
    \begin{tikzpicture}

\definecolor{color0}{rgb}{0.12156862745098,0.466666666666667,0.705882352941177}
\definecolor{color1}{rgb}{1,0.498039215686275,0.0549019607843137}
\definecolor{color2}{rgb}{0.172549019607843,0.627450980392157,0.172549019607843}

\begin{axis}[
tick align=outside,
tick pos=left,
title={100 circuits},
x grid style={white!69.0196078431373!black},
xlabel={Link fuzziness [factor]},
xmajorgrids,
xmin=0, xmax=6,
xtick style={color=black},
y grid style={white!69.0196078431373!black},
ylabel={Average throughput [MB/s]},
ymajorgrids,
ymin=0, ymax=19.5788279972076,
ytick style={color=black}
]
\addplot [very thick, color0]
table {%
0 18.9029188156128
0.1 18.8817844390869
0.15 19.1040515899658
0.2 19.0914659500122
0.25 19.0591707229614
0.3 19.0734186172485
0.35 19.1073760986328
0.4 19.0871915817261
0.45 19.2866621017456
0.5 19.0173768997192
0.55 19.1285104751587
0.6 19.0102529525757
0.65 18.8955574035645
0.7 18.8943700790405
0.75 18.7832365036011
0.8 19.0411233901978
0.85 19.0565586090088
0.9 19.0829172134399
0.95 19.3424663543701
1 19.0567960739136
1.2 18.9684591293335
1.4 19.105001449585
1.6 18.8620748519897
1.8 18.7865610122681
2 18.6934747695923
2.2 19.0489597320557
2.4 18.7799119949341
2.6 18.6958494186401
2.8 18.8699111938477
3 18.9235782623291
3.2 18.7022609710693
3.4 18.6504936218262
3.6 18.953498840332
3.8 18.8226556777954
4 18.6051378250122
4.2 18.4187278747559
4.4 18.4548225402832
4.6 18.164402961731
4.8 18.6314964294434
5 18.6464567184448
5.2 18.4557723999023
5.4 18.4412870407104
5.6 18.3662481307983
5.8 18.1895742416382
};
\addplot [very thick, color1]
table {%
0 18.9755830764771
0.1 19.0577459335327
0.15 19.0197515487671
0.2 19.0173768997192
0.25 18.9584856033325
0.3 19.2085361480713
0.35 19.3892469406128
0.4 18.768988609314
0.45 19.0537090301514
0.5 19.1375341415405
0.55 19.152494430542
0.6 18.9100427627563
0.65 19.3484029769897
0.7 19.1002521514893
0.75 19.0712814331055
0.8 19.0738935470581
0.85 19.2470054626465
0.9 19.1413335800171
0.95 18.9292774200439
1 19.1249485015869
1.2 18.8687238693237
1.4 18.7575902938843
1.6 18.9036312103271
1.8 19.0736560821533
2 18.7440547943115
2.2 18.4139785766602
2.4 18.9751081466675
2.6 18.8174314498901
2.8 18.6279344558716
3 18.76780128479
3.2 18.5362730026245
3.4 18.8145818710327
3.6 18.9349765777588
3.8 18.4294137954712
4 18.4294137954712
4.2 18.5486211776733
4.4 18.2159328460693
4.6 18.3947439193726
4.8 18.4059047698975
5 18.2893095016479
5.2 18.2907342910767
5.4 18.0544567108154
5.6 17.9340620040894
5.8 17.9535341262817
};
\addplot [very thick, color2]
table {%
0 16.3045683288574
0.1 16.2454395675659
0.15 16.2405857849121
0.2 16.2365393829346
0.25 16.260380859375
0.3 16.28246509552
0.35 16.2046335983276
0.4 16.320032043457
0.45 16.236064453125
0.5 16.2543207550049
0.55 16.2630499649048
0.6 16.2415546417236
0.65 16.2572463226318
0.7 16.2492010116577
0.75 16.2060583877563
0.8 16.2626035308838
0.85 16.2615681838989
0.9 16.3124331665039
0.95 16.2713707351685
1 16.2636103820801
1.2 16.2741158294678
1.4 16.3042358779907
1.6 16.2253025436401
1.8 16.2268983078003
2 16.3062400817871
2.2 16.2962570571899
2.4 16.3224351882935
2.6 16.3576084899902
2.8 16.3342324447632
3 16.2799954605103
3.2 16.2956776428223
3.4 16.3176383972168
3.6 16.3223781967163
3.8 16.2310111999512
4 16.2450406265259
4.2 16.0109857177734
4.4 15.9966333389282
4.6 15.9418264389038
4.8 15.8999851226807
5 15.9237126159668
5.2 15.8639664459229
5.4 15.6723987579346
5.6 15.5976258087158
5.8 15.690731048584
};
\end{axis}

\end{tikzpicture} \quad
    \begin{tikzpicture}

\definecolor{color0}{rgb}{0.12156862745098,0.466666666666667,0.705882352941177}
\definecolor{color1}{rgb}{1,0.498039215686275,0.0549019607843137}
\definecolor{color2}{rgb}{0.172549019607843,0.627450980392157,0.172549019607843}

\begin{axis}[
tick align=outside,
tick pos=left,
title={500 circuits},
x grid style={white!69.0196078431373!black},
xlabel={Link fuzziness [factor]},
xmajorgrids,
xmin=0, xmax=6,
xtick style={color=black},
y grid style={white!69.0196078431373!black},
ylabel={Average throughput [MB/s]},
ymajorgrids,
ymin=0, ymax=23.7925596199036,
ytick style={color=black}
]
\addplot [very thick, color0]
table {%
0 23.3841190338135
0.2 22.9733047485352
0.6 23.4655694961548
0.8 22.8505353927612
1 23.1689758300781
1.2 23.529447555542
1.4 23.3627471923828
1.6 23.2197933197021
1.8 22.7973432540894
2 23.1889228820801
2.2 22.997763633728
2.4 23.164701461792
2.6 23.1416673660278
2.8 23.1376304626465
3 23.1848859786987
3.2 22.946946144104
3.4 22.4335470199585
3.6 21.8902273178101
3.8 22.3516216278076
4 21.9156360626221
4.2 22.480565071106
4.4 22.2162666320801
4.6 21.532130241394
4.8 21.44260597229
5 21.8807287216187
5.2 20.8002634048462
5.4 21.3723163604736
5.6 20.5233793258667
5.8 20.147234916687
};
\addplot [very thick, color1]
table {%
0 23.3292646408081
0.2 22.1972694396973
0.6 22.3385610580444
0.8 22.3908033370972
1 21.9671659469604
1.2 22.1227054595947
1.4 23.2919826507568
1.6 22.5482425689697
1.8 22.4468450546265
2 22.5610656738281
2.2 21.357593536377
2.4 22.482702255249
2.6 22.1851587295532
2.8 21.6335277557373
3 21.8486709594727
3.2 21.7299385070801
3.4 21.2949028015137
3.6 21.1635847091675
3.8 21.1030311584473
4 22.3841543197632
4.2 22.0441045761108
4.4 20.7454090118408
4.6 21.31556224823
4.8 20.5134057998657
5 21.3162746429443
5.2 20.3611907958984
5.4 20.4141454696655
5.6 20.3657026290894
5.8 20.0805072784424
};
\addplot [very thick, color2]
table {%
0 18.7842623519897
0.2 18.7648092269897
0.6 18.7484526443481
0.8 18.7602214050293
1 18.7456505584717
1.2 18.7373012924194
1.4 18.7471133422852
1.6 18.7064878463745
1.8 18.6786474609375
2 18.6799202728271
2.2 18.6718179702759
2.4 18.6718464660645
2.6 18.6575415802002
2.8 18.6563447570801
3 18.6458013153076
3.2 18.6187683105469
3.4 18.6065531158447
3.6 18.6019747924805
3.8 18.6132781219482
4 18.5886102676392
4.2 18.5491625976562
4.4 18.5716267776489
4.6 18.564483833313
4.8 18.5546337890625
5 18.5570939254761
5.2 18.5750747680664
5.4 18.4997983932495
5.6 18.352237701416
5.8 18.2672062683105
};
\end{axis}

\end{tikzpicture}
    \caption{Impact of link latency deviation from the system model on achieved throughput, in random networks with 50~relays. The throughput values are obtained as the sum of all circuits.}
    \label{fig:eval-fuzziness-throughput}
}
\end{figure}
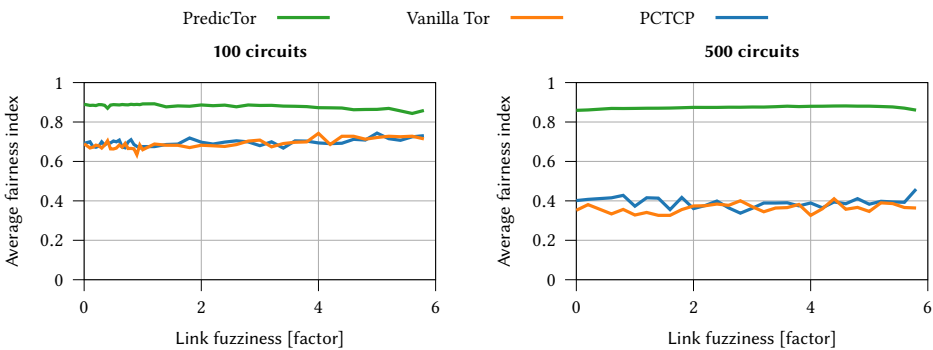
\begin{figure}
    \pgfplotsset{width=.45\textwidth,height=.3\textwidth}
    \pgfplotsset{
    tick label style={font=\scriptsize\sffamily},
    label style={font=\scriptsize\sffamily},
    title style={font=\scriptsize\bfseries\sffamily},
    legend style={font=\footnotesize},
    group/horizontal sep=0.5cm,
    }
    \centering{%
    \begin{tikzpicture}[font=\scriptsize]
    \node (v) {PredicTor};
    \draw[color=color2,very thick] ($(v.east) + (0.15cm,0cm)$) -- +(0.7cm,0cm);

    \node[right=1.75cm of v] (s) {Vanilla Tor};
    \draw[color=color1,very thick] ($(s.east) + (0.15cm,0cm)$) -- +(0.7cm,0cm);

    \node[right=1.75cm of s] (n) {PCTCP};
    \draw[color=color0,very thick] ($(n.east) + (0.15cm,0cm)$) -- +(0.7cm,0cm);
    \end{tikzpicture} \par %
    \begin{tikzpicture}

\definecolor{color0}{rgb}{0.12156862745098,0.466666666666667,0.705882352941177}
\definecolor{color1}{rgb}{1,0.498039215686275,0.0549019607843137}
\definecolor{color2}{rgb}{0.172549019607843,0.627450980392157,0.172549019607843}

\begin{axis}[
tick align=outside,
tick pos=left,
title={100 circuits},
x grid style={white!69.0196078431373!black},
xlabel={Link fuzziness [factor]},
xmajorgrids,
xmin=0, xmax=6,
xtick style={color=black},
y grid style={white!69.0196078431373!black},
ylabel={Average fairness index},
ymajorgrids,
ymin=0, ymax=1,
ytick style={color=black}
]
\addplot [very thick, color0]
table {%
0 0.691761347837625
0.1 0.699718638507778
0.15 0.675442870393851
0.2 0.671932142575464
0.25 0.677680803475529
0.3 0.699554787213648
0.35 0.683465143960583
0.4 0.693792272362767
0.45 0.693644817180613
0.5 0.703614773983748
0.55 0.699516467556637
0.6 0.708266778809278
0.65 0.672931552202677
0.7 0.669203103733029
0.75 0.698969384275459
0.8 0.710144619689934
0.85 0.685829433689713
0.9 0.676214761656677
0.95 0.667558376391728
1 0.67526741472172
1.2 0.675249144710786
1.4 0.685850124365878
1.6 0.687944099205078
1.8 0.718936960618119
2 0.697546406924352
2.2 0.688131555691387
2.4 0.698096523099379
2.6 0.704596708784262
2.8 0.699197568033614
3 0.680038708311595
3.2 0.69956027766756
3.4 0.667565155978994
3.6 0.704339262246256
3.8 0.703080394774425
4 0.693832123840102
4.2 0.690385756944492
4.4 0.692265278687436
4.6 0.712201467911313
4.8 0.708462931870957
5 0.74374898196004
5.2 0.715391205790301
5.4 0.707119498051085
5.6 0.724874072474384
5.8 0.730666223098881
};
\addplot [very thick, color1]
table {%
0 0.689801638213358
0.1 0.66819943262616
0.15 0.672444184386012
0.2 0.682540848186497
0.25 0.679161360149838
0.3 0.667795608556506
0.35 0.683129568871943
0.4 0.704959095052665
0.45 0.663724597941295
0.5 0.663247446131974
0.55 0.667685502914857
0.6 0.683082692270359
0.65 0.67224332388161
0.7 0.691546109713545
0.75 0.66542512653829
0.8 0.666085566309634
0.85 0.663950041376925
0.9 0.63389520372371
0.95 0.683575317914273
1 0.659329942172949
1.2 0.688373855631015
1.4 0.682194191629951
1.6 0.681840934226692
1.8 0.670161943684798
2 0.68221665385915
2.2 0.679351589761131
2.4 0.676024033972473
2.6 0.685536013952733
2.8 0.703198635994029
3 0.708241749689787
3.2 0.674135639380851
3.4 0.69071389196666
3.6 0.697174907682506
3.8 0.699175933415612
4 0.742497737835897
4.2 0.686023646895208
4.4 0.727622391405511
4.6 0.727626615006576
4.8 0.712828093870384
5 0.721375135916071
5.2 0.727732170823417
5.4 0.725193114258858
5.6 0.727958456635791
5.8 0.714338841316202
};
\addplot [very thick, color2]
table {%
0 0.889808917719106
0.1 0.884152771485075
0.15 0.885038393073814
0.2 0.883365336885207
0.25 0.888204855935001
0.3 0.887783304717646
0.35 0.88450964415707
0.4 0.869616640357278
0.45 0.885851301224168
0.5 0.888155830493167
0.55 0.887385809839144
0.6 0.886501104208304
0.65 0.888719026701249
0.7 0.88744155932349
0.75 0.886928966106214
0.8 0.889723331607986
0.85 0.888327488047893
0.9 0.889338091868981
0.95 0.888147215967596
1 0.891788212151924
1.2 0.892298334114907
1.4 0.876786178430736
1.6 0.881670114703315
1.8 0.879870083083938
2 0.886394470469215
2.2 0.882721828365032
2.4 0.885548994647849
2.6 0.877130546678725
2.8 0.886355376721568
3 0.883958584263239
3.2 0.88448771322927
3.4 0.880653502320358
3.6 0.8797441223312
3.8 0.878059690695617
4 0.872566015037349
4.2 0.871987353966362
4.4 0.87093406204138
4.6 0.862251063599375
4.8 0.863606292988379
5 0.864026752086257
5.2 0.86900431542515
5.4 0.856361609818968
5.6 0.843336793757335
5.8 0.858571337572182
};
\end{axis}

\end{tikzpicture} \quad
    \begin{tikzpicture}

\definecolor{color0}{rgb}{0.12156862745098,0.466666666666667,0.705882352941177}
\definecolor{color1}{rgb}{1,0.498039215686275,0.0549019607843137}
\definecolor{color2}{rgb}{0.172549019607843,0.627450980392157,0.172549019607843}

\begin{axis}[
tick align=outside,
tick pos=left,
title={500 circuits},
x grid style={white!69.0196078431373!black},
xlabel={Link fuzziness [factor]},
xmajorgrids,
xmin=0, xmax=6,
xtick style={color=black},
y grid style={white!69.0196078431373!black},
ylabel={Average fairness index},
ymajorgrids,
ymin=0, ymax=1,
ytick style={color=black}
]
\addplot [very thick, color0]
table {%
0 0.401427849358274
0.2 0.407594590393813
0.6 0.415477691471691
0.8 0.427874964460325
1 0.372991156847337
1.2 0.415631661312989
1.4 0.413722956625339
1.6 0.355701468422039
1.8 0.417287257404126
2 0.360824266971918
2.2 0.376415744456884
2.4 0.399093597900667
2.6 0.364451136345897
2.8 0.337570705060817
3 0.362004498372987
3.2 0.389133104352056
3.4 0.389134115327178
3.6 0.390358821740129
3.8 0.374626803885227
4 0.389245526754649
4.2 0.364823566106279
4.4 0.394413151312864
4.6 0.385465825355419
4.8 0.411230477598467
5 0.382926551696808
5.2 0.397845859959923
5.4 0.394527283270623
5.6 0.392624451654936
5.8 0.459323248356592
};
\addplot [very thick, color1]
table {%
0 0.352221361665069
0.2 0.380967668305008
0.6 0.333818099268914
0.8 0.356607280185873
1 0.328205188512186
1.2 0.341285631208057
1.4 0.326491271196965
1.6 0.326520434359689
1.8 0.355908100618104
2 0.374144954092062
2.2 0.375259152130801
2.4 0.383923820241048
2.6 0.377268165194989
2.8 0.400494254676775
3 0.370659046003033
3.2 0.34471053512311
3.4 0.363675186209682
3.6 0.365867183868336
3.8 0.381908802176369
4 0.326311332273672
4.2 0.359751677695726
4.4 0.410641731534355
4.6 0.35735023771741
4.8 0.366865793622057
5 0.346385286006268
5.2 0.390002499156756
5.4 0.386317467731017
5.6 0.366326463234885
5.8 0.363748203820027
};
\addplot [very thick, color2]
table {%
0 0.859080591227445
0.2 0.861221531946489
0.6 0.868843575793966
0.8 0.868404221796631
1 0.869014817597268
1.2 0.869840992758482
1.4 0.870017507001288
1.6 0.870761686424372
1.8 0.872378963614057
2 0.874224124059821
2.2 0.873917318233745
2.4 0.87398223575878
2.6 0.875033566641148
2.8 0.874883012603564
3 0.876009970072391
3.2 0.875794072355084
3.4 0.877700648562953
3.6 0.880218689543263
3.8 0.87819498152723
4 0.879719232842457
4.2 0.880035463511116
4.4 0.881144094467828
4.6 0.881503718768193
4.8 0.880364622680062
5 0.88035764606896
5.2 0.878552894683505
5.4 0.876432385010141
5.6 0.870307418395356
5.8 0.860269196461685
};
\end{axis}

\end{tikzpicture}
    \caption{Impact of link latency deviation from the system model on achieved fairness, in random networks with 50~relays. Fairness is measured in terms of the fairness index $F$.}
    \label{fig:eval-fuzziness-fairness}
}
\end{figure}

We first consider the achieved latency, shown in Figure~\ref{fig:eval-fuzziness-latency}.
Recall that in both cases (more and less congestion),
PredicTor achieved much lower latency than vanilla Tor and PCTCP
if the link latencies exactly matched the system model, as presented in Section~\ref{sec:network-complexity}.
Introducing link fuzziness now creates a more differentiated picture.
The first observation that can be made is that, for all of the considered Tor variants,
the overall latency grows with a growing link fuzziness factor.
This is not surprising due to the aforementioned
shift of the latency distribution for large fuzziness values.
A more significant observation, however, is that
PredicTor is affected by an increasing fuzziness much more
than the traditional approaches in vanilla Tor and PCTCP.
Despite the fact that PredicTor still performs better in this regard,
it loses much of its advantage.

This insight is important for evaluating the suitability of approaches
based on model predictive control for congestion handling:
Such approaches, like PredicTor, heavily rely on the assumption that
their internal system model gives a suitable representation of the real system behavior.
What happens in the case of latency is that PredicTor's predictions about
when data will arrive and what size the buffers will have in the future,
become less accurate with growing link fuzziness.
As a consequence, its effectiveness in reducing latency in the network also declines.
To a certain degree, this issue could be tackled by extending the system model,
including a more complex model for latency
as well as an explicit notion of dealing with latency variations in the controller.
However, on a conceptual level,
the issue of a mismatch between the modeled system behavior and the real behavior cannot fully be avoided.
In this regard, traditional approaches may prove more robust against unexpected external influences.

However, when looking at the achieved throughput and fairness,
presented in Figures~\ref{fig:eval-fuzziness-throughput} and~\ref{fig:eval-fuzziness-fairness},
we can see that a deviation from the system model
does not necessarily degrade performance in every regard.
As can be concluded from the plots,
the achieved throughput and fairness remain relatively unaffected even by large fuzziness values,
even less than the traditional approaches.
On the one hand, this may be due to the fact that we simulate
the distribution of feedback trajectories out-of-band, as discussed before.
On the other hand, however, we attribute this to the fact that
PredicTor operates on the notion of \emph{data~rates}
instead of absolute numbers of packets to be transferred.
These rates can be realized by Tor even if data is available later than expected due to higher latency
or if there are temporary peaks of data due to inaccurate latency prediction.
The controller is only called for computing a data transfer schedule in distinct time intervals.
In the meantime, Tor can adhere to the calculated plan
and benefit from the enforced properties such as max-min fairness,
even if the system model was partly inaccurate.
As a result, PredicTor even proves more robust than traditional approaches in these specific regards.
Please also note that, in its current form,
PredicTor makes use of TCP for realizing the underlying data transfer.
While this design choice was primarily made for simplicity reasons,
we now see that it also helps with robustness:
If PredicTor also ran its own transport protocol based on its system model,
the impact of a system model mismatch might have been more severe.
We thus think that it may constitute a promising strategy to combine predictive control approaches
with traditional algorithms in a way similar to what we did in PredicTor.

\subsection{Impact of Traffic Patterns} \label{sec:eval-traffic-patterns}

In order to better understand PredicTor's behavior with regard to its traffic dynamics,
we now focus on how it deals with other \emph{traffic patterns}.
The behavior and effectiveness of every congestion control algorithm
clearly also depends on the kind of traffic it is supposed to handle.

In the previous subsections, we only considered \emph{bulk traffic}.
That is, the data to be transferred denotes an infinite stream of bytes.
The intention thereof was to analyze the steady state behavior,
which provided general insights on PredicTor's mechanics.
At the same time, it enabled us to accurately measure the achieved throughput.
This approach, however, is not sufficient for establishing an understanding
of PredicTor's dynamic behavior.
We therefore consider another scenario with more dynamic traffic,
where data streams come and go.

\begin{figure}
    \pgfplotsset{width=.8\textwidth,height=.4\textwidth}
    \pgfplotsset{
    tick label style={font=\scriptsize\sffamily},
    label style={font=\scriptsize\sffamily},
    title style={font=\scriptsize\bfseries\sffamily},
    legend style={font=\footnotesize},
    group/horizontal sep=0.5cm,
    }
    \begin{tikzpicture}[font=\scriptsize]
    \node (v) {PredicTor (web)};
    \draw[color=color2,very thick] ($(v.east) + (0.15cm,0cm)$) -- +(0.7cm,0cm);
    \node[below=1em of v.west,anchor=west] (v2) {PredicTor (bulk)};
    \draw[color=color2,very thick,dashed] ({$(v.east) + (0.15cm,0cm)$} |- {v2.east}) -- +(0.7cm,0cm);

    \node[right=1.75cm of v] (s) {Vanilla Tor (web)};
    \draw[color=color1,very thick] ($(s.east) + (0.15cm,0cm)$) -- +(0.7cm,0cm);
    \node[below=1em of s.west,anchor=west] (s2) {Vanilla Tor (bulk)};
    \draw[color=color1,very thick,dashed] ({$(s.east) + (0.15cm,0cm)$} |- {s2.east}) -- +(0.7cm,0cm);

    \node[right=1.75cm of s] (n) {PCTCP (web)};
    \draw[color=color0,very thick] ($(n.east) + (0.15cm,0cm)$) -- +(0.7cm,0cm);
    \node[below=1em of n.west,anchor=west] (n2) {PCTCP (bulk)};
    \draw[color=color0,very thick,dashed] ({$(n.east) + (0.15cm,0cm)$} |- {n2.east}) -- +(0.7cm,0cm);
    \end{tikzpicture} \par %

    \includegraphics{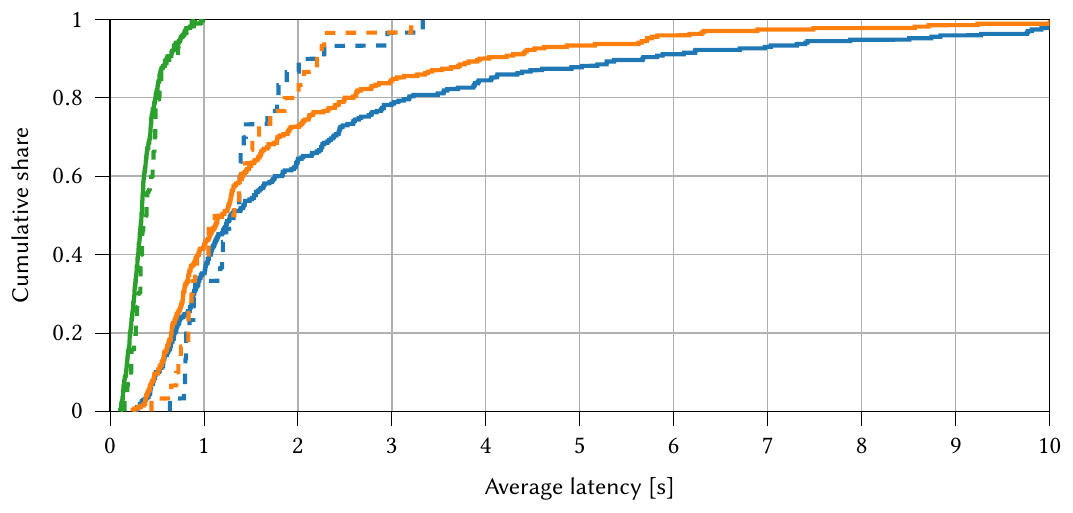}
	\caption{Circuit latencies with mixed bulk/web traffic (50~relays,
    300~circuits, 90\%~web circuits).}
	\label{fig:webratio}
\end{figure}

In order to do so, we follow a simple methodology that was put forward in~\cite{jansen2012methodically}
and since then has been applied by a series of publications in this field~%
\cite{DBLP:conf/ndss/WacekTBS13,DBLP:conf/uss/JansenGWSS14,tschorsch16bktap}.
The general approach is to divide the circuits into two groups:
On the one hand, a certain fraction of the circuits carries out bulk data transfers,
similarly to our previous approach.
On the other hand, the other circuits are regarded as \emph{interactive web} circuits.
Their behavior is meant to mimic that of a client interactively browsing the web.
More specifically, such circuits transfer an object 320~KB in size
and wait for a random amount of time (between 1~and 2~seconds) before they repeat.
This way, a certain amount of traffic volatility is created.
Although this model is very simplistic in nature,
we rely on it to establish comparability with previous work in the field.
In our implementation, the bulk circuits start first and the web circuits join later at random times.
We focus on one representative example for explaining various behavioral aspects that can be observed.
In particular, we continue to set the number of relays to~50 and the number of circuits to~300,
which corresponds to the mean value between the previous examples for low and high degrees of congestion.
Considering even more congested networks
would run counter to our goal of simulating interactive circuits
because even smaller per-circuit data rates would make bulk and interactive traffic more similar
due to the increased transfer times.
Moreover, we choose a share of interactive web circuits as~90\%
to approximate the estimation from previous work~\cite{DBLP:conf/pet/McCoyBGKS08,tschorsch16bktap}.
We carry out 25~random repetitions of this simulation scenario
and measure the byte-wise latency of data through the network, as before.

Figure~\ref{fig:webratio} presents a CDF plot of the achieved per-circuit latencies over all runs,
differentiating between bulk and web circuits, for PredicTor as well as vanilla Tor and PCTCP.
The main observation that can be made is that PredicTor achieves low latency for its circuits
even in this scenario with 90\%~web circuits.
In contrast, the majority of circuits handled by the traditional congestion control algorithms
exhibit clearly worse byte-wise latency.
Also, for web circuits, they lead to a long tail of extremely high latency.
This is due to the fact that the web circuits join the network
when it is already overloaded and contains large queues.

One might have expected PredicTor to perform worse,
because the short flows give it less opportunity to apply its predictive behavior.
However, there are two main reasons why this is not the case:
Firstly, even these short flows cover several optimization time steps of PredicTor
so it can in fact apply its predictions to some extent.
Secondly, and even more importantly,
this experiment clearly visualizes PredicTor's second characteristic behavioral trait,
apart from predictiveness---its cautiousness or \emph{pessimistic scheduling}.
While vanilla Tor and PCTCP send as much data into the network as possible,
PredicTor only does so when the circuits are assigned an appropriate data rate by the optimizer.
This, in return, only happens if the network is in fact
able to promptly process and forward the data.
Put differently, we can again see the trade-off made in PredicTor:
It optimizes the latency that payload bytes in the network experience,
at the cost of sacrificing throughput, as we have shown in Section~\ref{sec:network-complexity}.
In the following, we discuss, among others, this relationship more in-depth.

\subsection{Discussion} \label{sec:discussion}

The different steps of our evaluation convey the following overall picture:
PredicTor is highly effective at realizing what is explicitly defined
within its formal optimization objective.
In particular, the improvements it achieves with regard to latency and fairness
compared to vanilla~Tor and PCTCP, are considerable.
While this already becomes apparent by looking at absolute measurement values from selected runs,
the most remarkable difference lies in its asymptomatic behavior:
Unlike vanilla~Tor and PCTCP, both metrics do not degrade with growing levels of congestion,
but stay nearly constant independently of the level of congestion,
due to the created backpressure.
However, we have also seen that PredicTor achieves lower throughput than the traditional approaches.
It therefore makes a clear trade-off that is defined by the optimization goal in the controller.
Not being able to simultaneously optimize throughput and latency
is not a shortcoming specific to PredicTor,
but has long been known as an inherent limitation of congestion control algorithms~%
\cite{DBLP:journals/tcom/Jaffe81a}.
While the traditional approaches we considered (vanilla Tor and PCTCP) optimize for throughput,
PredicTor puts the emphasis on latency instead.
Other strategies would include, \eg, alternating between these goals over time,
as is done by BBR~\cite{DBLP:journals/queue/CardwellCGYJ16}.

We also showed that the latency improvement is not an artifact of the lower throughput,
but an achievement of the controller itself.
The explicit queue constraints~%
\eqref{eq:mpc_full_optim_sc_max}--\eqref{eq:mpc_full_optim_shat_max}
in the optimization problem enforce low backlog
and thus a reduced aggressiveness of the generated traffic.
As a consequence, we consider PredicTor and similar approaches based on distributed MPC
to bear strong potential as the base for novel congestion control mechanisms
that achieve performance values and trade-offs
that are not yet covered by existing traditional approaches.

Apart from this trade-off, we have seen two major disadvantages:
Firstly, like all MPC-based approaches,
PredicTor is dependent on the mathematical system model
it utilizes for making predictions of the network state.
Our evaluation reveals that deviations from this model can severely degrade the performance.
Moreover, the consequences of such model mismatches are not necessarily easy to foresee in advance.
As an example, we saw that throughput and fairness remained relatively unaffected
by growing link latency model errors.
We have also seen that the combination of MPC-based approaches like PredicTor
with traditional underlay transport protocols like TCP can be beneficial with regard to robustness.
The second factor that may potentially prove disadvantageous concerns scalability.
In our naive implementation, computational effort and communication overhead
grow linearly with the number of participating relays.
A multitude of improvement steps could be considered to alleviate or overcome this issue:
For instance, we imagine solving only individually reduced optimization problems at each relay
instead of the complete optimization problem we presented here.
Also, parameters like data resolution, horizon length, and data representation%
---including data compression---should be taken into account.
Continuing to research such refinements can make MPC-based congestion control schemes
very interesting alternatives to existing algorithms.

\section{Related Work} \label{sec:related-work}

Efficiently transferring the circuits' data through the Tor network is far from trivial.
There is a multitude of factors that is known for contributing to
performance issues in Tor~\cite{DBLP:journals/csur/AlSabahG16}.
These include the circuit selection~\cite{congestion-tor12},
local handling of connections at each relay~\cite{DBLP:conf/uss/JansenGWSS14}
and the transport protocol~\cite{udp-or}.
Congestion control touches each of these fields.
Research has shown that insufficient congestion control is a major factor
for Tor's performance problems~\cite{DBLP:journals/csur/AlSabahG16}.
Despite years of research on the topic,
many of these problems remain as of today,
also due to the challenging deployment process
for fundamental changes to the Tor network
\cite{DBLP:journals/corr/abs-1709-01044,doepmann18deployingonions}.

Since Tor is an overlay network,
there generally are multiple conceivable approaches towards congestion control.
In particular, congestion control could either be carried out
\emph{end-to-end} or \emph{hop-by-hop}.
Operating end-to-end matches more closely the classical notion of congestion control
as it is commonly understood for underlay networks.
In Tor, this would mean that only the endpoints (client and exit) are involved.
In fact, this is how vanilla Tor currently operates.
Contrary to many other IP~networks, though,
reliability is currently implemented in a hop-by-hop manner between Tor relays.
Several previous proposals have decided to stick to the paradigm of end-to-end congestion control.
For example, UDP-OR~\cite{udp-or} tunnels a single TCP connection through Tor
using UDP as an underlay.
IPPriv~\cite{DBLP:conf/icc/KiralyC09} follows a similar approach using IPsec.
Taking the idea of \enquote{stateless} intermediate relays one step further,
one could even consider applying active queue management techniques like CoDel~\cite{codel}
or quality~of~service approaches (\eg DiffServ~\cite{rfc2474} or DPS~\cite{stoica1999dps}).
The anonymization functionality could even be moved completely to the network stack,
as LAP~\cite{hsiao2012lap}, HORNET~\cite{chen2015hornet}, and TARANET~\cite{chen2018taranet} demonstrate.
However, there is a common drawback for all of these approaches:
Since the relays within a circuit may be located all over the world,
the resulting round-trip times between the endpoints become very large.
As a consequence, the increased feedback loop
results in degraded performance~\cite{DBLP:conf/dsn/AmirD03,tschorsch12transport}.
PredicTor therefore takes advantage of the fact
that the intermediate relays operate on the application layer anyways,
so they can be taken into account for congestion control,
keeping the feedback loop small and potentially enabling better performance.
This strategy has been followed before, \eg, by replacing Tor's very coarse-grained
end-to-end congestion window with a more flexible scheme~\cite{DBLP:conf/pet/AlSabahBGGMSV11},
or even integrating multi-hop congestion control
into a tailored transport protocol~\cite{tschorsch16bktap}.

On the other hand, PCTCP~\cite{alsabah2013pctcp},
which uses a dedicated TCP connection between each relay for every circuit,
has the potential to be actually deployed in Tor.
While PCTCP provides some improvements, \eg, in fairness,
it still does not provide sufficient congestion control.
Other approaches often require changes to the network infrastructure
and are therefore not directly applicable.
These approaches have focused primarily on re-using existing approaches
from the networking research for this specific use case.
In contrast, our work facilitates recent advances in the field of control technology.
Thereby, we open a new perspective for advancing congestion control using interdisciplinary research.

Multi-hop congestion control has been an active research topic in other fields as well.
For example, numerous scientific contributions apply suitable schemes
in the context of (wireless) mesh networks~%
\cite{hop-by-hop-wireless,DBLP:journals/adhoc/ScheuermannLM08,DBLP:conf/mobicom/JiangZW09}.
These, however, have slightly different use cases and premises.
For example, in Tor, it would not be acceptable to route around
congested areas of the network as this would put the user's anonymity to risk.
However, in special situations,
our approach may also be applicable to these scenarios in some similar form.

The problem of congestion in networks has also been studied extensively
from a control theoretical perspective in the past.
Previous works include classic linear control~\cite{Mascolo1999}
including PID~\cite{Yanfie2003} and state-feedback LQR control~\cite{Azuma2006}.
It is well understood
that delay is among the main challenges of controlling the network.
More recently, especially optimization-based methods have been applied
to the problem with promising results~\cite{He2007,Mota2012}.
Model predictive control (MPC), as applied in~\cite{Mota2012},
is an advanced control technique that can deal with non-linear systems and
explicitly take constraints into consideration.
Its predictive control action is particularly suited for systems with significant delay.
Furthermore, MPC has received significant attention
as a method for distributed control~\cite{Negenborn2014, CHRISTOFIDES2013},
where local controllers interact to jointly control an interconnected system.
Distributed MPC is often applied to systems with a complex network character,
such as transportation systems~\cite{Dunbar2012},
energy management~\cite{Patel2016}
or process industry applications~\cite{CHRISTOFIDES2013},
where a centralized solution is prohibitive due to the size of the system
or privacy concerns.
In order to obtain global properties, the local action is often coordinated by exchanging information about predicted future behavior~\cite{Negenborn2014}.

\section{Conclusion and Future Work} \label{sec:conclusion}
In this work, we proposed a refined version of PredicTor,
a novel model predictive control formulation to tackle the challenge of
congestion control in multi-hop overlay networks like Tor.
PredicTor is a distributed approach that relies on exchanging information
about the predicted network state between adjacent nodes.

We presented a thorough evaluation of PredicTor's performance in complex networks.
Our results indicate that approaches like PredicTor
that build upon distributed model predictive control for congestion control
can, in fact, achieve clear improvements in various regards.
The flexibility of tailoring the optimization goal to the exact requirements of a use case
makes it possible to realize flexible trade-offs.
In PredicTor, we chose to prioritize a strict notion of max-min fairness
as well as low latency in the network.
As our evaluation shows, PredicTor is able to clearly improve on the status quo in these regards.
On the other hand, these benefits are traded for lower throughput.

Our work on using model predictive control for congestion handling
shows the potential of bringing together traditional networking research
and modern control theoretical approaches.
However, our work only denotes a starting point for this research direction.
Several open issues remain:
One important current drawback is the dependency on an accurate system model.
If this underlying description of the system does, \eg, not capture an unexpected external influence,
the advantage of model predictive control can easily be lost.
Another research question that remains unanswered for now is scalability:
Our implementation of PredicTor was not optimized in this regard
and we did not take communication overhead into account
because the resulting goodput ratio is too dependent on the real network conditions
(\ie the absolute data rate).
However, the computational effort and communication
necessary for exchanging the feedback information
currently grows linearly in the number of relays for each node.
This puts a natural upper limit on the size of networks that can efficiently be handled.
However, for overlay networks that are limited in size, it may prove viable.
Future research will have to show
whether these disadvantages can be alleviated to enable real-world application.

\begin{acks}
This work has been partially funded by the
Deutsche Forschungsgemeinschaft
(DFG, German Research Foundation, TS 477/1-1).
\end{acks}
\printbibliography

\end{document}